\newtheorem{thm}{Theorem}
\newtheorem{col}{Corollary}
\newtheorem{pro}{Proposition}
\newtheorem{mydef}{Definition}
\newtheorem{lem}{Lemma}
\begin{document}

\title{Secure Partial Repair in Wireless Caching Networks with Broadcast Channels}

\author{Majid Gerami, Ming Xiao, Somayeh Salimi, and Mikael Skoglund
\\ School of Electrical Engineering, KTH, Royal Institute of Technology, Sweden. \\ E-mail: \{gerami, mingx, somayen, skoglund\}@kth.se\\}
\date{\today}

\maketitle

\begin{abstract}

 We study security in partial repair in wireless caching networks where parts of the stored packets in  the caching nodes are susceptible to be erased. Let  us denote a caching node that has lost parts of its stored packets as a \emph{sick caching node} and a caching node that has not lost any packet as a \emph{healthy caching node}. In partial repair, a set of caching nodes (among sick and healthy caching nodes) broadcast information to other sick caching nodes to recover the erased packets. The broadcast information from a caching node is assumed to be received without any error by all other caching nodes. All the sick caching nodes then are able to recover their erased packets, while using the broadcast information and the non-erased packets in their storage as side information. In this setting, if an eavesdropper overhears the broadcast channels, it might obtain some information about the stored file.
We thus study  secure  partial repair in the senses of information-theoretically \emph{strong} and \emph{weak} security. In both senses, we investigate the secrecy caching capacity, namely, the maximum amount of information which can be stored in the caching network such that there is no leakage of information during a partial repair process. We then deduce the strong and weak secrecy caching capacities, and also derive the sufficient finite field sizes for achieving the capacities. Finally, we propose optimal secure codes for exact partial repair, in which the recovered packets are exactly the same as erased packets.
\end{abstract}

\IEEEpeerreviewmaketitle
\section{Introduction}
Caching, namely,  bringing  popular files closer to (potential) user nodes (or clients) has been widely used in computer systems, computer networks and the Internet~\cite{wang1999survey,buck1996analytic,boyles1996locating}. While these networks have been mostly based on wired communications, caching  has benefited  in efficient use of network resources, e.g., energy and bandwidth,  in controlling congestion, and in reducing latency (the delay of accessing the users's requested data). Recently, the availability of large storage space  in mobile user nodes and in intermediate (relay) nodes  has attracted  interest in  using  caching in wireless and cellular networks while exploiting device-to-device (D2D) communications~\cite{hu2000caching,mishra2004context,golrezaei2012femtocaching,maddah2013fundamental,ji2013wireless}. In one scenario which is depicted in Fig.~\ref{Fig1:CachingSystem}, a base station in off-peak hours distributes coded packets of popular files to mobile storage nodes. We refer to such nodes as  caching nodes. Since the popular files are generally big-size files~\cite{CaireICC2013}, each of the caching nodes may store parts of the files. In these systems, in peak hours, a part of the users' file requests can be offered by  mobile caching nodes through D2D communications. Consequently,  caching in wireless networks benefits the networks in the efficient use of resources  and  in less delay in accessing a file by the user nodes. Such systems, to always work properly, i.e., to always have a copy of stored files available in the caching nodes, need a mechanism to protect the stored information in caching nodes against information loss.

\begin{figure}
 \centering
   \psfrag{n1}[][][3.5]{node $1$}
   \psfrag{n2}[][][3.5]{node $2$}
   \psfrag{n3}[][][3.5]{node $3$}
   \psfrag{n4}[][][3.5]{node $4$}
    \psfrag{Eve}[][][3.5]{Eve}
   \psfrag{t1}[][][3.5]{$a_1+z_1$}
   \psfrag{t2}[][][3.5]{$2(a_1+z_1)+a_2+z_2$}
   \psfrag{n11}[][][3.5]{$a_1$}
   \psfrag{n12}[][][3.5]{$z_1$}
   \psfrag{n21}[][][3.5]{$2a_1+a_2$}
   \psfrag{n22}[][][3.5]{$2z_1+z_2$}
   \psfrag{n31}[][][3.5]{$a_1+a_2$}
   \psfrag{n32}[][][3.5]{$z_1+z_2$}
   \psfrag{n41}[][][3.5]{$a_2$}
   \psfrag{n42}[][][3.5]{$z_2$}
   \resizebox{8cm}{!}{\epsfbox{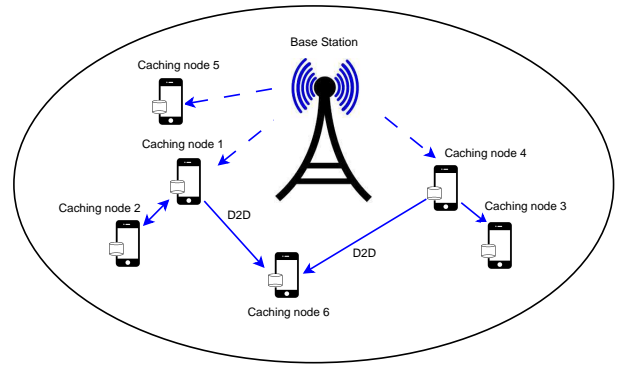}}
\caption{An example of wireless caching network. A part of the users' requests are delivered by mobile caching nodes through D2D communications.}
\label{Fig1:CachingSystem}
\end{figure}

Coding the stored data in caching nodes makes these networks more robust against losing information, especially when packets in storage nodes are vulnerable to failure. A failure can affect all the packets in a caching node or a part of stored packets, which are respectively referred to as  node failure or partial node failure. In  wireless caching networks, parts of the stored packets can be erased due to hardware problems or software malicious attacks. In addition, a node failure may happen due to caching node mobility. That is, a caching node may move and become out of the base station's and other nodes' coverage. As a consequence, the system may lose access to some stored packets.  Losing the stored packets can be modeled as packet erasures, in which the maximum distance separable (MDS) codes provide the highest reliability against packet erasures~\cite{roth2006introduction}. A file coded by an $(n,k)$-MDS code is divided into $k$ equal-sized fragments\footnote{Here, a fragment is a set containing some equal-sized packets of information.} which are then coded to $n$ fragments such that any set of $k$ fragments can rebuild the whole stored file.

\begin{figure*}
 \centering
   \psfrag{n1}[][][3.5]{node $1$}
   \psfrag{n2}[][][3.5]{node $2$}
   \psfrag{n3}[][][3.5]{node $3$}
   \psfrag{n4}[][][3.5]{node $4$}
   \psfrag{t1}[][][3.5]{$a_1+b_1$}
   \psfrag{t2}[][][3.5]{$2(a_1+b_1)+a_2+b_2$}
   \psfrag{n11}[][][3.5]{$a_1$}
   \psfrag{n12}[][][3.5]{$b_1$}
   \psfrag{n21}[][][3.5]{$2a_1+a_2$}
   \psfrag{n22}[][][3.5]{$2b_1+b_2$}
   \psfrag{n31}[][][3.5]{$a_1+a_2$}
   \psfrag{n32}[][][3.5]{$b_1+b_2$}
   \psfrag{n41}[][][3.5]{$a_2$}
   \psfrag{n42}[][][3.5]{$b_2$}
   \resizebox{8cm}{!}{\epsfbox{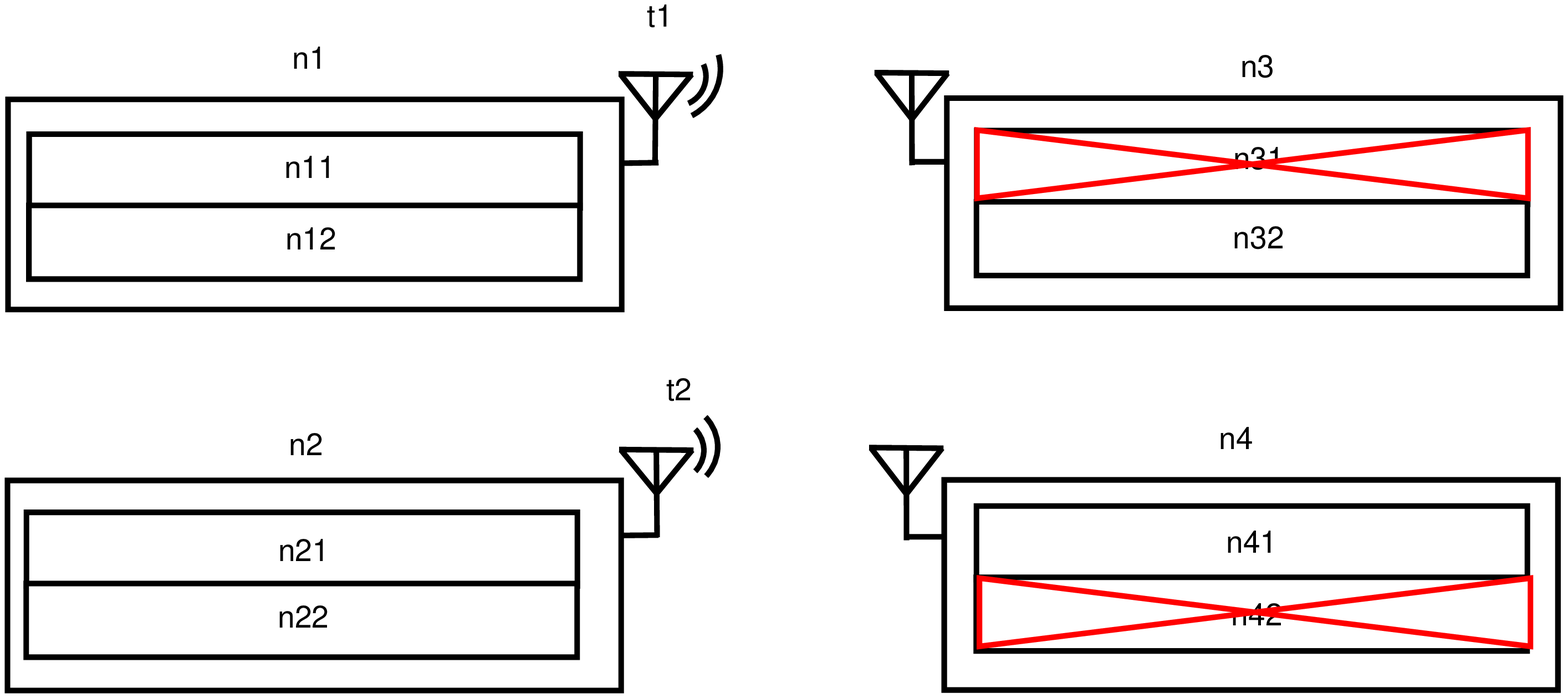}}
\caption{An example for partial repair. Four caching nodes store a file containing four packets $\{a_1,a_2,b_1,b_2\}$ by a $(4,2)-$MDS code with coding coefficients from finite field $\mathrm{GF}(3)$. Node $3$ and $4$ each loses one of their packets.   For  partial repair, nodes $1$ and $2$ broadcast packets $a_1+b_1$ and $2(a_1+b_1)+a_2+b_2$, respectively. Nodes $3$ and $4$ by the broadcast data and the available side information can recover their erased packets. Thus, two packets are needed to be transmitted for exact partial repair.}
\label{Fig2:Example(4,2)MDS_Recovery}
\end{figure*}

\begin{figure*}
 \centering
   \psfrag{n1}[][][3.5]{node $1$}
   \psfrag{n2}[][][3.5]{node $2$}
   \psfrag{n3}[][][3.5]{node $3$}
   \psfrag{n4}[][][3.5]{node $4$}
    \psfrag{Eve}[][][3.5]{Eve}
   \psfrag{t1}[][][3.5]{$a_1+z_1$}
   \psfrag{t2}[][][3.5]{$2(a_1+z_1)+a_2+z_2$}
   \psfrag{n11}[][][3.5]{$a_1$}
   \psfrag{n12}[][][3.5]{$z_1$}
   \psfrag{n21}[][][3.5]{$2a_1+a_2$}
   \psfrag{n22}[][][3.5]{$2z_1+z_2$}
   \psfrag{n31}[][][3.5]{$a_1+a_2$}
   \psfrag{n32}[][][3.5]{$z_1+z_2$}
   \psfrag{n41}[][][3.5]{$a_2$}
   \psfrag{n42}[][][3.5]{$z_2$}
   \resizebox{8cm}{!}{\epsfbox{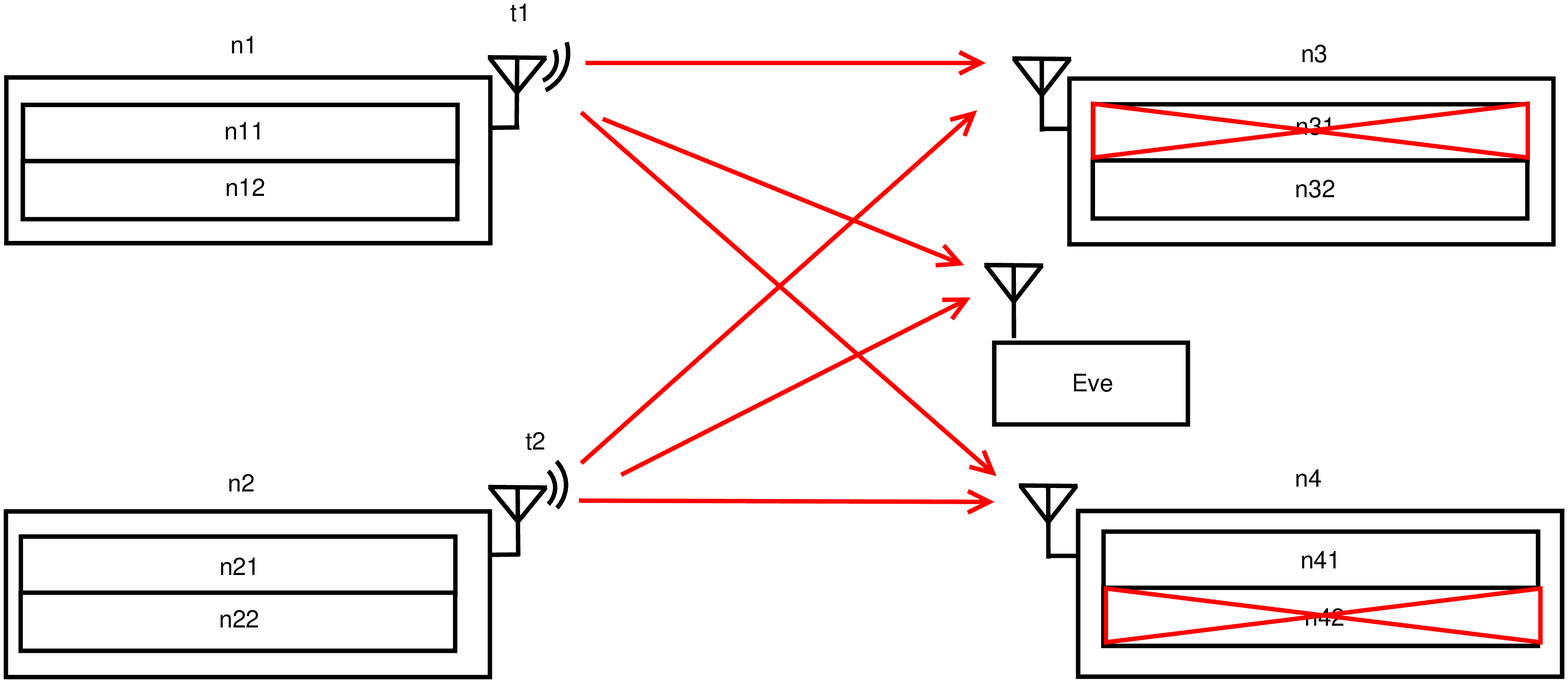}}
\caption{An example for  secure partial repair in presence of an eavesdropper. Eve in this figure represents an eavesdropper. To have strong security, the caching nodes stores two information packets $a_1,a_2$ from a file, along with two random packets $z_1,z_2$. These four packets are coded by a $(4,2)-$MDS code with coding coefficients from finite field $\mathrm{GF}(3)$. Node $3$ and $4$ each loses one of their packets.   For   secure partial repair, nodes $1$ and $2$ broadcast packets $a_1+z_1$ and $2(a_1+z_1)+a_2+z_2$, respectively. Nodes $3$ and $4$ by the broadcast data and the available side information can recover their erased packets, while Eve cannot obtain any information about the source. Thus, two packets are needed to be transmitted for exact partial repair and Eve cannot decode any information about the source file.}
\label{Fig3:Example(4,2)SecureMDS_Recovery}
\end{figure*}

In a wireless caching network, when a node fails, in a process referred to as repair, a new node is generated by the help of other caching nodes. The repair process has been extensively studied recently and the optimal codes in the sense of the required number of bits are proposed in~\cite{dimakis2010network,rashmi2009explicit,rashmi2012regenerating,Yuchong01,Ken01}. In the literature, it is mostly assumed that all the packets in a caching node are lost and then the new node is generated. In  wireless caching networks, parts of the stored packets can be erased due to hardware problems or software malicious attacks. In a recent work~\cite{gerami2014Letter}, the repair has been studied when some packets in each caching node are lost. We denote the repair process when parts of caching nodes are erased as the \textit{partial repair}.  In the partial repair problem, (possibly all) the caching nodes broadcast packets of information to all other nodes. The minimum required number of packet transmissions for partial repair has been studied in~\cite{gerami2014Letter}. Partial repair can be functional or exact. In  functional partial repair the regenerated packet might not be exactly the same as the erased packet but the system retains an $(n,k)-$MDS code, while in exact partial repair the regenerated packet is exactly the same as the erased packet. Exact repair has the benefits that it does not  require communications of the updated codes and it is also easier for data collectors to download the file when there are some systematic nodes (and remain systematic by exact repair).

The broadcast information by caching nodes during partial repair might be overheard by an eavesdropper. In many applications, it is important to avoid any leakage of information to an unintended user. In this paper, we focus on secure partial repair in which the eavesdropper  obtains  no information by overhearing repairing packets.

There are  mainly two kinds of security definitions in the network coding literature: strong security and weak security.  In the case of strong security, it is required that the mutual information between the source and what an eavesdropper overhears is zero. In practice, however, this level of security is too strict~\cite{bhattad2005weakly}. Consider for example a source that contains two binary symbols  $a$ and $b$.  Then an eavesdropper who overhears the symbol $a+b$, obtains (information-theoretically) one bit of information. However, the eavesdropper can not obtain any \emph{meaningful} information about source symbols $a$ and $b$, where $a$ and $b$ are independent symbols. That is, the eavesdropper cannot interpret (decode) any information about symbols  $a$ and $b$ by knowing $a+b$. This weakened level of security, which does not imply  strongly secure, yet is useful in practice, is denoted as weak security. We study security in partial repair in both senses.

First, we give an example to clarify the security matters in partial repair and then show secure codes for this example. Consider a wireless caching network, as shown in Fig.~\ref{Fig2:Example(4,2)MDS_Recovery}. Packets $a_1, a_2, b_1, b_2$ are coded by a $(4,2)-$MDS code in the system. The packets have fixed size and contain elements from a finite field, $GF(q)$ (here $q=3$). Suppose that nodes $3$ and $4$ individually lose one of their stored packets. In repair, node $1$ broadcasts a coded packet $(a_1+b_1)$ and node $2$ broadcasts a coded packet $2(a_1+b_1)+a_2+b_2$, using two interference-free and error-free broadcast channels. Nodes $3$ and $4$ can recover their erased packets by receiving  the broadcast information and using their stored packets as side information. For instance, node $4$ recovers its erased packets by removing $(a_1+b_1)$ fragment from $2(a_1+b_1)+a_2+b_2$, which yields $a_2+b_2$. Since node $4$ has side information $a_2$ in its storage, the node can recover $b_2$ by a simple operation $a_2+b_2-a_2=b_2$. Similar mechanism holds for recovering the lost packet in node $3$. Now, assume that there is an eavesdropper who overhears the broadcast channels. Without a secure code,  the eavesdropper obtains two packets of information (this shall be more clear in the next section)  from  information-theoretic point of view.  To provide a secure partial repair, we modify the stored packets in the following way: consider packets ($a_1, a_2$) as source packets and construct packets $z_1, z_2$ by taking values from the finite field $GF(q)$ at uniformly random and independent from the source packets ($a_1, a_2$). Then encode packets
($a_1, a_2, z_1, z_2$) with the same $(4,2)$-MDS code, as illustrated in Fig.~\ref{Fig3:Example(4,2)SecureMDS_Recovery}. We can verify that the eavesdropper cannot decode any information about the source packets by accessing the broadcast channels. By this change, two packets of source information (here, $a_1, a_2$) are stored in the cache network. We may say in this example the \emph{strong secrecy caching capacity} is $2$.

Now, let us return to Fig.~\ref{Fig2:Example(4,2)MDS_Recovery}. In this example, if an eavesdropper overhears the broadcast channels, it cannot obtain any \emph{meaningful} information, because it cannot decode any information about packets $a_1,a_2,b_1,b_2$ from the broadcast information. We say the system is weakly secure and the \emph{weak secrecy caching capacity} is $4$.  A natural question is how to derive the secrecy caching capacity in a general setting and how to construct optimally secure partial repair codes in both senses of weak and strong network coding security. This is what we discuss in the rest of the paper.

\subsection{Our Contributions}
We study security in partial repair, and to the best of our knowledge we are the first to do so. We find the secrecy caching capacities in both senses of information-theoretically strong and weak security notions. Meanwhile, the required finite field size for the secure code is derived. Since, exact repair is more interesting in practice, we propose partial exact repair codes which are optimal (achieving the secrecy caching capacity) for some scenarios.

\subsection{Related Works}\label{SubSec:Related Works}
There are a wealth of works in information-theoretic security over the last decades. The secrecy capacity in noiseless wiretap channel (known as wiretap channel II) has been studied by Ozarow and Wyner in~\cite{ozarow1984wire}. Cai and Yeung studied the security in wiretap networks~\cite{cai2002secure}. They studied the security in multicast networks where intermediate nodes can encode their received messages and an eavesdropper has access to a set of links in the networks. This coding is denoted as secure network coding in multicast networks. The secure capacity of multicast networks has been derived and the required finite field size for achieving the capacity has been obtained in~\cite{cai2002secure}. Later, El Rouayheb and Soljanin studied the secure network codes in wiretap networks by extending the Ozarow and Wyner coding scheme to wiretap networks. They then proposed secure network codes which require smaller finite filed sizes than the proposed codes in~\cite{cai2002secure}. Security by exploiting network topology has been studied in~\cite{jain2004security}. The interesting point in the proposed algorithm in ~\cite{jain2004security} is that the algorithm works even for cyclic networks. Feldman et. al. in~\cite{feldman2004capacity} showed that the finite field size for secure network code can be made considerably smaller if  the information rate of the secure code is allowed to be a little smaller than the secure capacity. The notion of \emph{meaningful information} leaked to an eavesdropper and weakly secure network codes have been information-theoretically studied in~\cite{bhattad2005weakly} and then code constructions have been proposed. Silva and Kschischang used rank-metric codes and proposed universal strong and weak secure network codes in~\cite{silva2008security} and~\cite{silva2009universal}.

More related works to our study are in~\cite{pawar2011securing,oliveira2012coding,kadhe2014weakly,shah2011Globcom} where security in the repair problem of distributed storage systems has been studied. In~\cite{pawar2011securing} the repair problem in the presence of a passive eavesdropper (who can only intercept the data in a network) and in presence of an active eavesdropper (who can change the data in a network) has been studied and the strong secure codes have been suggested. In~\cite{shah2011Globcom}, the secure regenerating codes using product matrix codes are suggested. Weak secure regenerating codes have been  recently studied in~\cite{kadhe2014weakly}. The previous studies of the repair problem in distributed storage systems assume a node or several nodes fail and all their stored packets are lost~\cite{dimakis2010network,rashmi2009explicit,rashmi2012regenerating,Yuchong01,Ken01}. Same assumption is also followed in the previous studies of the security in the repair problem~\cite{pawar2011securing,oliveira2012coding,kadhe2014weakly,shah2011Globcom}. In a recent work, the authors in~\cite{gerami2014Letter} studied (partial) repair when parts of the packets in storage node are lost.

\subsection{Organization}\label{SubSec:Organization}
The organization of the paper is as follows. In Section~\ref{Sec:ProblemFormulation} we define the secrecy caching capacity in our setting and then we formulate the problem.  Next, we analyse the secrecy caching capacity in Section~\ref{Sec:SecureCapacity} and discuss about codes that achieve  the secrecy caching capacity. We present secure codes for exact partial repair achieving the secrecy caching capacity in Section~\ref{Sec:CodeForExactRepair}. Finally, we conclude the paper in Section~\ref{Sec:Conclusion}.

\section{Problem Formulation}\label{Sec:ProblemFormulation}
To study strong and weak security in partial repair,  we first describe the setting of the partial repair problem in wireless caching network and then we formulate the secrecy caching capacities in the senses of strong and weak security. Meanwhile, we define  information-theoretically strong and weak security conditions.

Consider an $(n,k)$-MDS-coded wireless caching network where $n$ caching nodes in the network store a file of size $M=kt$ packets such that every $k$ caching nodes can rebuild the stored file. Here, a packet is one unit of information. All packets have equal size and  contain elements from the finite field $GF(q)$. We assume each caching node stores $t$ coded packets, where $t$ is a design parameter. When some packets in the caching nodes (possibly in all caching nodes) are erased, the erased packets are recovered in a partial repair process. Clearly, to recover all the erased packets in the system, the total number of  available packets  must be greater than or equal to the size of the source file $kt$. Otherwise  the repair is not possible (some information packets are lost permanently if they cannot be obtained again from the source). In the rest of the paper, we always assume this condition holds.

Suppose, some packets in caching nodes are erased; that is, node $i$ (for $i \in [n]$\footnote{ $[n]$ denotes the set $\{1,2,\dots,n\}$ }.) has lost $t-|P_i|$ of its stored packets or in other words, it has access to $|P_i|\leq t $ packets.  In the partial repair, caching node $i$ (for $i \in [n]$) transmits $r_i$ packets, where each transmitted packet is a linear combinations of available packets in the node's storage.  Thus, in total $\Gamma=\sum_{i=1}^n r_i$ packets are transmitted for recovery of the erased packets. Let random variables $Y_1,Y_2,\dots,Y_{\Gamma}$ denote the $\Gamma$ packets in partial repair. In partial repair, we assume each caching node uses a broadcast channel to transmit its repairing packets to all  other caching nodes. The broadcast channels are assumed to be error-free. We also assume there is no interference between the channels, for example, due to the use of orthogonal channels.

Now assume that there is an eavesdropper who overhears the broadcast channels. We aim to design  partial repair codes such that there would be no leakage of information to the eavesdropper in senses of strong and weak security conditions. These are formally defined, as follows.

\begin{mydef}[Strong Security] Consider a wireless caching network in which a source file is distributed among caching nodes. Let  the source file be denoted by a set $\mathcal{S}$ which contains $| \mathcal{S} |$ packets, i.e., $\mathcal{S}=\{s_1,s_2,\dots,s_{| \mathcal{S} |}\}$. Assume an eavesdropper has access to a set of packets $\mathcal{E}=\{e_1,\dots,e_{| \mathcal{E} |} \}$. The code is strongly secure, if
   \begin{equation}
   H(\mathcal{S}| \mathcal{E})=H(\mathcal{S})
   \end{equation}
  \end{mydef}
Here, $H(X)$ denotes the base-$q$ entropy of the random variable $X$. For a set $\mathcal{X}=\{X_1,X_2,\dots,X_i\}$, we define $H(\mathcal{X})=H(X_1,X_2,\dots,X_i)$.

\begin{mydef}[Weak Security] Consider the same wireless caching network where a source file  contains $| \mathcal{S} |$ packets, i.e., $\mathcal{S}=\{s_1,s_2,\dots,s_{| \mathcal{S} |}\}$, and an eavesdropper has access to a set of packets $\mathcal{E}=\{e_1,\dots,e_{| \mathcal{E} |} \}$. The code is weakly secure, if
\begin{equation}
   H(s_i| \mathcal{E})=H(s_i) \text{ for } i=1,\dots,| \mathcal{S}|.
   \end{equation}
\end{mydef}
Unlike the strong security condition, the eavesdropper in the weak security condition obtains some information, but it cannot deduce any meaningful information about the individuals packets (here $s_i$ for $i\in\{1,\dots,|\mathcal{S}|\}$) of the source.

A fundamental question is that how much is the maximum amount of information that we can store in the caching network such that an eavesdropper obtains no information about the source in partial repair. More formally, let  $\mathcal{S}$ represent the source file. As we use an ($n,k$)-MDS code for storing the source file in the caching nodes, a set of $k$ nodes, which is denoted by a set $\mathcal{D}$, contains $M$ independent packets. That is $H(\mathcal{D})=M$.  Information-theoretically, an eavesdropper overhearing packets $Y_1,\dots,Y_{\Gamma}$ obtains no information about the source if
\begin{equation}
H(\mathcal{S}|Y_1,\dots,Y_{\Gamma})=H(\mathcal{S}).
\end{equation}

Since every $k$ nodes can reconstruct the source file, we have
\begin{equation}
H(\mathcal{S} | \mathcal{D})=0, \text{ for } \forall \mathcal{D} \subset [n], |\mathcal{D}|=k.
\end{equation}
We may refer to this as the perfect reconstruction condition. We formally define the strong secrecy caching capacity (which is here denoted as $C_{ss}$) as
\begin{eqnarray}
C_{ss} \triangleq  \max &  \hspace{-2 cm} H(\mathcal{S}),\nonumber \\
\text{subject to:} & H(\mathcal{S}|Y_1,\dots,Y_{\Gamma})=H(\mathcal{S}),\nonumber \\
& H(\mathcal{S} | \mathcal{D})=0, \text{ for } \forall \mathcal{D} \subset [n], |\mathcal{D}|=k.
\end{eqnarray}

Following the same setting when an eavesdropper overhears packets $Y_1,\dots,Y_{\Gamma}$, we formally define the weakly secrecy caching capacity (which is denoted here as $C_{ws}$) as
\begin{eqnarray}
C_{ws} \triangleq  \max &  \hspace{-2 cm} H(\mathcal{S}),\nonumber \\
\text{subject to:} & H(s_i| Y_1,\dots,Y_{\Gamma})=H(s_i),\\ & \text{ for } i=1,\dots,| \mathcal{S}|,\nonumber \\
& H(\mathcal{S} | \mathcal{D})=0, \text{ for } \forall \mathcal{D} \subset [n], |\mathcal{D}|=k.
\end{eqnarray}

The minimum  required number of packet transmissions for non-secure partial repair has been derived in~\cite{gerami2014Letter}, which we here restate the results, as follows.

 \begin{thm}[restated from~\cite{gerami2014Letter}]\label{Th:1} Consider an $(n,k)$-MDS-coded wireless caching network  storing a file of size $M$ packets. Assume each caching node can store $t$ packets.  We also assume caching node $i$ has lost $t-|P_i| (0 \leq |P_i| \leq t)$  packets, and thus, it still contains $|P_i|$ packets, for $i \in [n]$. For  partial repair, caching node $i$ broadcasts $r_{i}$ packets to all other nodes. The necessary and sufficient condition for  partial repair  is that for any set $\mathcal{D}=\{n_{i_1},n_{i_2},\dots, n_{i_k} \}$ of  $k$ distinct caching nodes we have
\begin{equation}
 \sum_{i \in [n] \backslash \mathcal{D} } r_i \geq kt-\sum_{{i_j}|n_{i_j}\in \mathcal{D}} |P_{i_j}|.\label{inequality}
\end{equation}
\end{thm}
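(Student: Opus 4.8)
The plan is to prove the two directions of the equivalence separately: \emph{necessity} by a cut-type entropy bound, and \emph{sufficiency} by a constructive random-coding argument over a sufficiently large field.

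For necessity I would fix an arbitrary set $\mathcal D=\{n_{i_1},\dots,n_{i_k}\}$ of $k$ nodes and reason as follows. If a valid partial repair exists, then after it the nodes of $\mathcal D$ must jointly reconstruct the entire file, so by the perfect reconstruction property their post-repair content has (base-$q$) entropy at least $M=kt$. That content is a deterministic function of the $\sum_{n_{i_j}\in\mathcal D}|P_{i_j}|$ packets still stored inside $\mathcal D$ together with the $\Gamma=\sum_{i\in[n]}r_i$ broadcast packets $Y_1,\dots,Y_\Gamma$; but every broadcast emitted by a node inside $\mathcal D$ is itself a function of that node's retained packets, and is therefore already counted. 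Hence only the $\sum_{i\in[n]\setminus\mathcal D}r_i$ broadcasts originating outside $\mathcal D$ can add anything new, and since each packet carries at most one $q$-ary unit of entropy, subadditivity of entropy gives $kt\le\sum_{n_{i_j}\in\mathcal D}|P_{i_j}|+\sum_{i\in[n]\setminus\mathcal D}r_i$, which is precisely~\eqref{inequality}.

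For sufficiency, assume~\eqref{inequality} holds for every $k$-subset (and, without loss of generality, that $r_i\le|P_i|$ for all $i$, since broadcasting more packets than a node has independent ones is useless). I would exhibit a scheme over a large field $GF(q)$: fix a sufficiently generic $(n,k)$-MDS code, so node $i$'s retained packets span a subspace $V_i\subseteq GF(q)^{kt}$ with $\dim V_i=|P_i|$; let node $i$ broadcast $r_i$ independent packets formed as random linear combinations of its retained packets, spanning $C_i\subseteq V_i$; and let each sick node $d$ build its $t-|P_d|$ recovered packets as random linear combinations of its side information and the broadcasts it hears. The scheme is a valid partial repair exactly when the post-repair storage is again $(n,k)$-MDS, i.e.\ when for every $k$-subset $\mathcal D$ the $kt$ post-repair coefficient vectors of $\mathcal D$ span $GF(q)^{kt}$; since those vectors lie in $U_{\mathcal D}:=\sum_{n_{i_j}\in\mathcal D}V_{i_j}+\sum_{i\in[n]\setminus\mathcal D}C_i$ (the broadcasts sent from inside $\mathcal D$ already lie in $\sum_{n_{i_j}\in\mathcal D}V_{i_j}$), it suffices to show $U_{\mathcal D}=GF(q)^{kt}$ for every $\mathcal D$ and that the random recovered packets then complete each sick node to a $t$-dimensional block in general MDS position.

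I expect this last step to be the main obstacle: producing a \emph{single} assignment of coefficients (code, broadcasts, recoveries) that satisfies all $\binom{n}{k}$ full-rank conditions at once. For a fixed $\mathcal D$, hypothesis~\eqref{inequality} makes the dimension count $\sum_{n_{i_j}\in\mathcal D}|P_{i_j}|+\sum_{i\in[n]\setminus\mathcal D}r_i$ reach $kt$, so ``$\dim U_{\mathcal D}=kt$'' is achievable and hence generically attained; this event, like each ``MDS position'' event for the recovered blocks, is the non-vanishing of a fixed nonzero polynomial in the coefficients, so the Schwartz--Zippel lemma together with a union bound over the finitely many subsets $\mathcal D$ shows that all of them hold simultaneously as soon as $q$ exceeds a bound of order $\binom{n}{k}$ --- which is also the source of the ``sufficient finite field size'' claimed elsewhere in the paper. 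Equivalently, one can cast the construction as a single-source multicast on an information-flow DAG whose sinks are the $\binom{n}{k}$ data collectors, check that the min-cut to the sink of $\mathcal D$ equals $\sum_{n_{i_j}\in\mathcal D}|P_{i_j}|+\sum_{i\in[n]\setminus\mathcal D}r_i\ge kt$, and invoke the linear network-coding theorem; the delicate point is the same, namely designing the DAG so that these min-cuts are exactly the right-hand side of~\eqref{inequality} and so that every feasible code on it pulls back to a feasible partial repair.
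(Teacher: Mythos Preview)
Your proposal is correct and in substance agrees with the paper's own proof, which is only a one-line sketch: ``model the problem as an information flow problem in a multicast network and use the results of network coding in multicast networks.'' Your final paragraph is exactly that reduction, with the min-cut to each data collector $\mathcal D$ equal to $\sum_{n_{i_j}\in\mathcal D}|P_{i_j}|+\sum_{i\notin\mathcal D}r_i$ and the network-coding theorem furnishing sufficiency; so there is no genuine divergence in strategy.

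Where you go slightly further than the paper is in giving a self-contained entropy/cut argument for necessity rather than reading it off the min-cut of the information-flow graph, and in spelling out the random-linear-combination construction with a Schwartz--Zippel/union-bound justification before invoking the multicast theorem. Both are fine and are standard unpackings of the same idea. One small caution: your ``without loss of generality $r_i\le|P_i|$'' is really a standing hypothesis rather than a WLOG---if $r_i>|P_i|$ the stated inequality can hold while the node can only emit $|P_i|$ independent packets, so sufficiency would fail; the theorem (as restated from the cited reference) should be read with that restriction implicit.
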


\begin{proof}[Proof (sketch)] We model the problem as an information flow problem in a multicast network and  then we use the results of network coding in multicast networks. For more details, please refer to~\cite{gerami2014Letter}.\end{proof}

The following corollary is deduced as a result of Theorem~\ref{Th:1}, by summing both sides of inequalities in (\ref{inequality}) and removing unnecessary constraints over $\binom{n}{k}$ sets of selection, $\forall \mathcal{D} \subset [n]$s.

\begin{col}\label{Pro:closerBound} Consider a file encoded by an $(n,k)-$MDS code in a wireless caching system in which each node stores $t$  packets. Assume there are $n_h (0 \le n_h \le n)$ nodes having no erased packets (healthy nodes). Then $\Gamma_{\min}$  is computed by \end{col}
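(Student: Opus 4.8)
The plan is to start from the necessary and sufficient condition in Theorem~\ref{Th:1}, namely that for every $k$-subset $\mathcal{D}$ of nodes,
\begin{equation}
\sum_{i \in [n]\setminus\mathcal{D}} r_i \;\ge\; kt - \sum_{i_j : n_{i_j}\in\mathcal{D}} |P_{i_j}|, \nonumber
\end{equation}
and to turn this family of $\binom{n}{k}$ constraints into a single scalar lower bound on $\Gamma = \sum_{i=1}^n r_i$. The key observation is that the tightest constraint is obtained by choosing $\mathcal{D}$ to make the right-hand side as large as possible while making the left-hand side involve as few of the $r_i$ as possible. Since the $r_i$ are nonnegative, dropping a node from $\mathcal{D}$ only weakens the bound, so we should put into $[n]\setminus\mathcal{D}$ exactly the nodes whose transmissions we are forced to count, and fill $\mathcal{D}$ with the $k$ nodes that retain the most side information — i.e., the $n_h$ healthy nodes (each contributing $|P_i| = t$) together with, if $k > n_h$, the $k - n_h$ sick nodes with the largest $|P_i|$.

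First I would relabel the sick nodes so that $|P_1| \ge |P_2| \ge \cdots \ge |P_{n-n_h}|$ (healthy nodes being those with $|P_i|=t$). Then I would apply the inequality from Theorem~\ref{Th:1} with $\mathcal{D}$ chosen as the union of all healthy nodes and the top $k-n_h$ sick nodes (assuming $k\ge n_h$; the case $k\le n_h$ is degenerate and forces $\Gamma_{\min}=0$ since a set of $k$ healthy nodes already reconstructs the file). For that choice, $\sum_{i\in\mathcal{D}}|P_i| = n_h t + \sum_{j=1}^{k-n_h}|P_j|$, and the left side is $\sum_{i\in[n]\setminus\mathcal{D}} r_i \le \Gamma$, so
\begin{equation}
\Gamma_{\min} \;=\; kt - n_h t - \sum_{j=1}^{k-n_h} |P_j| \;=\; (k-n_h)t - \sum_{j=1}^{k-n_h} |P_j|. \nonumber
\end{equation}
To get this as an equality (achievability) rather than just a lower bound, I would exhibit an assignment of the $r_i$ meeting all $\binom{n}{k}$ constraints with $\sum r_i$ equal to this value — e.g., concentrate all transmissions on the nodes outside the distinguished $\mathcal{D}$ and verify that the remaining constraints (corresponding to other choices of $\mathcal{D}'$) are automatically satisfied because any other $\mathcal{D}'$ has $\sum_{i\in\mathcal{D}'}|P_i|$ no larger, hence a smaller right-hand side, while its complement contains enough of the loaded $r_i$'s. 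This reduction to a sum over other selections is precisely the ``summing both sides and removing unnecessary constraints'' step referenced before the corollary.

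The main obstacle I anticipate is the achievability/tightness direction: showing that the single bound above is not merely implied by, but actually equivalent to, the full system of $\binom{n}{k}$ inequalities, which requires checking that the minimizing $\mathcal{D}$ really is the healthy-plus-largest-sick set and that no other subset yields a strictly stronger constraint once we are free to place the $r_i$ mass where we like. This is essentially a combinatorial exchange argument: given any feasible $(r_i)$, one shows $\sum r_i$ is at least the stated value by the single inequality, and conversely that the greedy transmission schedule is feasible by a sorting/rearrangement comparison across subsets. A secondary, more routine point is handling boundary cases cleanly — $n_h \ge k$ (capacity trivially zero), $n_h = 0$ (no healthy nodes, bound becomes $kt - \sum_{j=1}^{k}|P_j|$), and the standing assumption that the total surviving packets $\sum_i |P_i| \ge kt$ so that repair is feasible at all.
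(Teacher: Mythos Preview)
Your proposal does not prove the corollary and contains a genuine error.

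\textbf{Wrong target formula.} The expression you arrive at, $(k-n_h)t-\sum_{j=1}^{k-n_h}|P_j|$, is not the quantity stated in Corollary~\ref{Pro:closerBound}. The corollary's formula depends only on the aggregate $\sum_{i=1}^n|P_i|$ and on binomial coefficients in $n,k,n_h$; it does not single out the $k-n_h$ largest surviving-packet counts. So even a flawless execution of your plan would establish a different statement.

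\textbf{The achievability step fails.} Your claim that loading all the $r_i$ onto $[n]\setminus\mathcal{D}$ for one extremal $\mathcal{D}$ satisfies every other constraint is false. Take $n=4$, $k=2$, $t=2$, $n_h=0$, $|P_i|=1$ for all $i$. Every constraint reads $r_a+r_b\ge 2$ for the pair $\{a,b\}=[4]\setminus\mathcal{D}$, and all six pairs occur. Your recipe with $\mathcal{D}=\{1,2\}$ sets $r_1=r_2=0$, $r_3+r_4=2$, but then the constraint for $\mathcal{D}'=\{3,4\}$ demands $r_1+r_2\ge 2$, which is violated. The LP optimum here is $\Gamma_{\min}=4$ (achieved by $r_i=1$), matching the corollary's formula $\min\{kt,\,nkt/(n-k)-k\sum|P_i|/(n-k)\}=\min\{4,8-4\}=4$, not your value $2$.

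\textbf{What the paper actually does.} The derivation is the opposite of ``pick the single tightest constraint'': it \emph{sums} the inequalities of Theorem~\ref{Th:1} over all $\binom{n}{k}$ choices of $\mathcal{D}$. On the left, each $r_i$ appears in exactly $\binom{n-1}{k}$ complements, so the sum becomes $\binom{n-1}{k}\Gamma$; on the right one gets $\binom{n}{k}kt-\binom{n-1}{k-1}\sum_i|P_i|$. Dividing yields the second case. The phrase ``removing unnecessary constraints'' refers to discarding (or reweighting around) the subsets that involve only healthy nodes when $n_h>k$, which is how the modified binomial counts $\binom{n-1}{k}-\binom{n_h-1}{k}$ and $\binom{n-1}{k-1}-\binom{n_h-1}{k-1}$ enter. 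Your sentence identifying your greedy choice with this summing step is a misreading.

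\textbf{Boundary case.} Your assertion that $n_h\ge k$ forces $\Gamma_{\min}=0$ is also incorrect. That $k$ healthy nodes can reconstruct the file does not repair the sick nodes for free; the constraints with sick nodes inside $\mathcal{D}$ remain nontrivial. For instance $n=3$, $k=1$, $t=2$, nodes $1,2$ healthy, $|P_3|=0$ gives $r_1+r_2\ge 2$, hence $\Gamma_{\min}=2$.
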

\begin{equation}
\Gamma_{\min}  =  \left\{
  \begin{array}{l l}
     \min \left.\Bigg\{\right. kt,\cfrac{\binom{n}{k}kt}{\binom{n-1}{k}-\binom{n_h-1}{k}}\,-\\  \left. \hspace{10.5mm}\cfrac{\left[\binom{n-1}{k-1}-\binom{n_h-1}{k-1}\right]\sum_{i=1}^n |P_i|}{\binom{n-1}{k}-\binom{n_h-1}{k}} \right.\Bigg\}  \hfill \text{if $n_h>k,$}\vspace{2mm}\\
     \min \left.\Bigg\{\right.  kt, \cfrac{nkt}{n-k}- \cfrac{k}{n-k}\sum_{i=1}^n |P_{i}| \left.\right.\Bigg\} \quad \text{otherwise.}
  \end{array} \right. \label{Eq:closerBound}
\end{equation}
We shall derive the strong and weak secrecy caching capacities in the next section.

\section{Secure Caching Capacity}\label{Sec:SecureCapacity}
In this section, we derive the secrecy caching capacities for the strong and weak security conditions.

\subsection{Strong Secrecy Capacity}\label{Sec:Upperbound}
We first derive an upper bound on the strong secrecy caching capacity and then show that this bound is tight by proving the existence of codes that achieve the upper bound.
\subsubsection{Upper Bound}
 For a given $\Gamma_{min}$, the minimum number of packet transmissions in partial repair, we can derive an upper bound of strong secrecy caching capacity, as follows.
\begin{lem}\label{Pro:SecrecyUpperBound}  Suppose an $(n,k)$-MDS-coded wireless caching network has the capacity of storing $M$ packets. Suppose caching node $i$, for $i \in [n]$, has lost $t-|P_i|$ ($0 \leq |P_i| \leq t$)  packets and $\Gamma=\sum_{i=1}^n r_i$ packets are transmitted by caching nodes in the partial repair process. Let $\Gamma_{\mathrm{min}}$ denote the minimum required number of packet transmissions, derived from Corollary~\ref{Pro:closerBound}. Then the secure cache capacity is upper bounded by
\begin{equation}
C_{ss} \leq M-\Gamma_{\mathrm{min}}.
\end{equation}
\end{lem}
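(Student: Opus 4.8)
The plan is to establish the bound $C_{ss} \le M - \Gamma_{\min}$ by an entropy-counting argument: the eavesdropper overhears the $\Gamma$ repair packets $Y_1,\dots,Y_\Gamma$, and any sick caching node must be able to reconstruct its erased packets from these broadcasts together with its own side information. First I would fix a set $\mathcal{D}$ of $k$ caching nodes achieving the tightest instance of the constraint in Theorem~\ref{Th:1}, so that the nodes outside $\mathcal{D}$ collectively transmit exactly (or, after the pruning in Corollary~\ref{Pro:closerBound}, at least) $\Gamma_{\min}$ packets, and observe that $H(\mathcal{D}) = M$ by the MDS / perfect-reconstruction property. The key observation is that from the transmitted packets $\{Y_1,\dots,Y_\Gamma\}$ plus the surviving packets $\{P_i : n_i \in \mathcal{D}\}$ one can reconstruct all the original (pre-erasure) packets stored in the nodes of $\mathcal{D}$, hence the whole file; so these $\Gamma_{\min}$-many transmitted symbols together with the $\sum_{i_j : n_{i_j}\in\mathcal{D}} |P_{i_j}|$ surviving symbols span a space containing $\mathcal{S}$.

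Next I would run the chain-rule bookkeeping. Write $H(\mathcal{S}) = H(\mathcal{S}) - H(\mathcal{S}\mid Y_1,\dots,Y_\Gamma) + H(\mathcal{S}\mid Y_1,\dots,Y_\Gamma)$; the strong-security constraint forces the difference of the first two terms to be $I(\mathcal{S};Y_1,\dots,Y_\Gamma)=0$, so in fact $H(\mathcal{S}) = H(\mathcal{S}\mid Y_1,\dots,Y_\Gamma)$. I then bound $H(\mathcal{S}\mid Y_1,\dots,Y_\Gamma)$ from above: since $\mathcal{S}$ is a deterministic function of the $k$-node set $\mathcal{D}$, and $\mathcal{D}$ in turn is determined by the surviving packets in $\mathcal{D}$ together with the repair transmissions (that is exactly what makes partial repair feasible for those nodes), we get
\begin{equation}
H(\mathcal{S}\mid Y_1,\dots,Y_\Gamma) \le H\big(\{P_{i_j}: n_{i_j}\in\mathcal{D}\} \mid Y_1,\dots,Y_\Gamma\big) \le \sum_{i_j: n_{i_j}\in\mathcal{D}} |P_{i_j}|.
\end{equation}
Combining with $H(\mathcal{D}) = M$ and the counting inequality $\sum_{i\notin\mathcal{D}} r_i \ge kt - \sum_{i_j: n_{i_j}\in\mathcal{D}}|P_{i_j}|$ from Theorem~\ref{Th:1}, together with $\Gamma \ge \Gamma_{\min}$ from Corollary~\ref{Pro:closerBound}, rearranges to $H(\mathcal{S}) \le M - \Gamma_{\min}$. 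I would present this for the $\mathcal{D}$ that makes the bound tightest, and invoke the summing/pruning step of Corollary~\ref{Pro:closerBound} to pass from the per-$\mathcal{D}$ inequalities to the single number $\Gamma_{\min}$.

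The main obstacle, and the step I would be most careful about, is justifying that $H(\mathcal{S}\mid Y_1,\dots,Y_\Gamma)$ is genuinely at most the number of surviving packets inside a well-chosen $\mathcal{D}$ — in other words, showing that the broadcasts are ``worth'' $\Gamma_{\min}$ symbols of side information against the eavesdropper simultaneously with being enough for repair. This needs a clean argument that the recovery map used by the sick nodes in $\mathcal{D}$ (guaranteed to exist by Theorem~\ref{Th:1}) exhibits the original contents of $\mathcal{D}$ as a function of $(Y_1,\dots,Y_\Gamma)$ and $\{P_{i_j}\}$, so that conditioning on the $Y$'s collapses the entropy of $\mathcal{D}$ — hence of $\mathcal{S}$ — down to at most $\sum |P_{i_j}|$. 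One also has to make sure the choice of which $k$ nodes form $\mathcal{D}$ is the one minimizing $kt - \sum|P_{i_j}|$ subject to feasibility, which is exactly the extremal set underlying Corollary~\ref{Pro:closerBound}; everything else is the routine chain-rule arithmetic above.
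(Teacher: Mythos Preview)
Your argument has a genuine gap at the ``rearranges to'' step. You correctly derive $H(\mathcal{S}) = H(\mathcal{S}\mid Y_1,\dots,Y_\Gamma) \le \sum_{i_j:n_{i_j}\in\mathcal{D}} |P_{i_j}|$ for every $k$-set $\mathcal{D}$, using strong secrecy together with the fact that repair lets each node in $\mathcal{D}$ reconstruct its full contents from its surviving packets and the broadcasts. But the counting inequality from Theorem~\ref{Th:1}, namely $\sum_{i\notin\mathcal{D}} r_i \ge kt - \sum_{i_j\in\mathcal{D}}|P_{i_j}|$, rearranges to $\sum_{i_j\in\mathcal{D}}|P_{i_j}| \ge M - \sum_{i\notin\mathcal{D}} r_i$: a \emph{lower} bound on $\sum|P_{i_j}|$, not an upper bound. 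It therefore cannot be chained with $H(\mathcal{S}) \le \sum|P_{i_j}|$ to reach $H(\mathcal{S}) \le M - \Gamma_{\min}$. The quantity $H(\mathcal{D})=M$ you cite is never actually used in your chain, and the ``summing/pruning'' of Corollary~\ref{Pro:closerBound} acts on the $r_i$-constraints, not on your side-information bound.

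Concretely, take $n=4$, $k=2$, $t=2$, $M=4$, and let every node lose one packet, so $|P_i|=1$ for all $i$. Corollary~\ref{Pro:closerBound} gives $\Gamma_{\min}=4$, hence the lemma asserts $C_{ss}\le 0$. Your route, however, yields only $H(\mathcal{S})\le \min_{\mathcal{D}} \sum_{i_j\in\mathcal{D}}|P_{i_j}|=2$, strictly weaker. The paper's proof avoids this by bounding the entropy of the broadcasts directly: for any successful repair one has $H(Y_1,\dots,Y_\Gamma)\ge \Gamma_{\min}$, and this is combined with $H(\mathcal{D})=M$ and $H(Y_1,\dots,Y_\Gamma\mid S)\le H(\mathcal{D}\mid S)$ (the $Y$'s being functions of $\mathcal{D}$) inside the identity $H(\mathcal{S})=I(\mathcal{S};\mathcal{D})-I(\mathcal{S};Y_1,\dots,Y_\Gamma)$. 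The missing ingredient in your plan is precisely this entropy lower bound on the broadcasts themselves; bounding via the surviving side information in a single $\mathcal{D}$ is not enough.
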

\begin{proof}
\begin{eqnarray}
H(S)=&H(S | Y_1,Y_2,\dots, Y_{\Gamma})-H(S | \mathcal{D}),\label{upp_bound_eq1}\\
=&I(S;\mathcal{D})-I(S;Y_1,Y_2,\dots, Y_{\Gamma}),\label{upp_bound_eq2}\\
=&H(\mathcal{D})-H(Y_1,Y_2,\dots, Y_{\Gamma})\nonumber\\&-H(\mathcal{D} | S)+ H(Y_1,Y_2,\dots, Y_{\Gamma} | S), \label{upp_bound_eq3}\\
\leq & M-\Gamma_{\mathrm{min}}.\label{upp_bound_eq4}
\end{eqnarray}
In the proof, (\ref{upp_bound_eq1}) holds because of the strong security condition $H(S | Y_1,Y_2,\dots, Y_{\Gamma})=H(S)$, and the fact that every set of $k$ nodes can reconstruct the stored file, i. e.,  $H(S | \mathcal{D})=0$. We obtain (\ref{upp_bound_eq2}) by adding and subtracting a term $H(S)$. We also know that for successful repair, we must have $H(Y_1,Y_2,\dots, Y_{\Gamma})\geq\Gamma\geq \Gamma_{\mathrm{min}}$. In (\ref{upp_bound_eq3}), we have $H(Y_1,Y_2,\dots, Y_{\Gamma} | S)-H(\mathcal{D} | S)\leq 0$ since
\begin{eqnarray}
\hspace{-6cm} &H(\mathcal{D},Y_1,Y_2,\dots,Y_{\Gamma} | S)\\
=&H(\mathcal{D} | S)+ H(Y_1,Y_2,\dots,Y_{\Gamma} | \mathcal{D},S),\label{upp_bound_eq5}\\
=&H(Y_1,Y_2,\dots,Y_{\Gamma} | S)+H(\mathcal{D} | Y_1,Y_2,\dots,Y_{\Gamma},S).\end{eqnarray}
Since $(Y_1,Y_2,\dots,Y_{\Gamma})$ is a function of $\mathcal{D}$ and $S$ then $H(Y_1,Y_2,\dots,Y_{\Gamma} | \mathcal{D},S)=0$, and since $H(\mathcal{D} | Y_1,Y_2,\dots,Y_{\Gamma},S)\geq 0$, then $H(Y_1,Y_2,\dots, Y_{\Gamma} | S)-H(\mathcal{D} | S)\leq 0$. This finalizes the proof.
\end{proof}

The following corollary is an immediate result from Lemma~\ref{Pro:SecrecyUpperBound}.
\begin{col}\label{col:secure} A partial repair code achieves the upper bound of the strong security caching capacity  if
\begin{equation}
H(Y_1,Y_2,\dots, Y_{\Gamma} | S)=\Gamma.
\end{equation}
\end{col}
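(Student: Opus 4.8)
The plan is to reuse the equality chain in the proof of Lemma~\ref{Pro:SecrecyUpperBound}. Steps (\ref{upp_bound_eq1})--(\ref{upp_bound_eq3}) are all exact identities, so for any strongly secure code we in fact have
\[
H(S)=H(\mathcal{D})-H(Y_1,\dots,Y_{\Gamma})-H(\mathcal{D}|S)+H(Y_1,\dots,Y_{\Gamma}|S),
\]
and only the last step (\ref{upp_bound_eq4}) introduces slack, through the two inequalities $H(Y_1,\dots,Y_{\Gamma})\ge\Gamma\ge\Gamma_{\min}$ and $H(Y_1,\dots,Y_{\Gamma}|S)-H(\mathcal{D}|S)\le 0$. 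The latter, exactly as argued in Lemma~\ref{Pro:SecrecyUpperBound} via (\ref{upp_bound_eq5}), comes from $Y_1,\dots,Y_{\Gamma}$ being a deterministic function of $\mathcal{D}$, which gives $H(\mathcal{D}|S)=H(Y_1,\dots,Y_{\Gamma}|S)+H(\mathcal{D}|Y_1,\dots,Y_{\Gamma},S)$. Hence it suffices to show that, under the hypothesis $H(Y_1,\dots,Y_{\Gamma}|S)=\Gamma$, both inequalities are met with equality.

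First I would observe that the hypothesis, combined with the trivial bounds $H(Y_1,\dots,Y_{\Gamma}|S)\le H(Y_1,\dots,Y_{\Gamma})\le\Gamma$ (the base-$q$ entropy of $\Gamma$ packets over $GF(q)$ is at most $\Gamma$), forces $H(Y_1,\dots,Y_{\Gamma})=\Gamma$; the transmitted packets therefore carry no redundancy, and since $I(S;Y_1,\dots,Y_{\Gamma})=H(Y_1,\dots,Y_{\Gamma})-H(Y_1,\dots,Y_{\Gamma}|S)=0$ the code is automatically strongly secure. A capacity-achieving code must also use the minimum number of transmissions, $\Gamma=\Gamma_{\min}$: transmitting fewer packets makes repair impossible by Corollary~\ref{Pro:closerBound}, while transmitting more only enlarges $H(Y_1,\dots,Y_{\Gamma})$ and hence strictly decreases the right-hand side of the identity above. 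With $\Gamma=\Gamma_{\min}$ this gives $H(Y_1,\dots,Y_{\Gamma})=\Gamma_{\min}$, so the first inequality is tight.

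It then remains to handle the second inequality, i.e.\ to establish $H(\mathcal{D}|Y_1,\dots,Y_{\Gamma},S)=0$. Substituting $H(Y_1,\dots,Y_{\Gamma})=H(Y_1,\dots,Y_{\Gamma}|S)=\Gamma_{\min}$ into the identity yields $H(S)=M-H(\mathcal{D}|Y_1,\dots,Y_{\Gamma},S)$, so this residual term vanishes precisely when $H(S)=M-\Gamma_{\min}$. I would close the loop by pairing the hypothesis with the achievability construction: store $M-\Gamma_{\min}$ mutually independent source packets together with $\Gamma_{\min}$ independent uniform keys drawn from $GF(q)$ and apply the same $(n,k)$-MDS code, so that $H(S)=M-\Gamma_{\min}$ holds by design and the perfect-reconstruction constraint $H(S|\mathcal{D})=0$ is satisfied; the condition $H(Y_1,\dots,Y_{\Gamma}|S)=\Gamma$ then certifies that this particular code is strongly secure, and by Lemma~\ref{Pro:SecrecyUpperBound} it meets the upper bound $M-\Gamma_{\min}$.

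The delicate point is exactly this last step. By pure entropy manipulation the hypothesis $H(Y_1,\dots,Y_{\Gamma}|S)=\Gamma$ guarantees only strong security and the tightness of $H(Y_1,\dots,Y_{\Gamma})=\Gamma$; pushing $H(S)$ all the way up to $M-\Gamma_{\min}$ additionally requires the selected $k$-node content $\mathcal{D}$ to leak nothing beyond $(Y_1,\dots,Y_{\Gamma},S)$, which is a property of an optimal, minimal-transmission code whose randomness budget is fully used rather than a generic consequence of the inequalities. So the main obstacle is to argue that a code satisfying the hypothesis and using $\Gamma_{\min}$ transmissions cannot simultaneously have $H(S)<M-\Gamma_{\min}$, that is, that the stated condition is not merely necessary but genuinely sufficient for achieving $C_{ss}$.
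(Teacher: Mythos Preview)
The paper offers no proof of this corollary at all; it is simply declared ``an immediate result from Lemma~\ref{Pro:SecrecyUpperBound}.'' Your analysis is substantially more careful than the paper's, and the reduction you extract,
\[
H(S)=M-H(Y_1,\dots,Y_{\Gamma})-H(\mathcal{D}\mid Y_1,\dots,Y_{\Gamma},S),
\]
obtained by combining (\ref{upp_bound_eq3}) with (\ref{upp_bound_eq5}), makes the equality conditions transparent. Your observation that $H(Y_1,\dots,Y_{\Gamma}\mid S)=\Gamma$ forces $H(Y_1,\dots,Y_{\Gamma})=\Gamma$ and $I(S;Y_1,\dots,Y_{\Gamma})=0$ is correct and is exactly what the paper is implicitly using.

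The subtlety you flag in your final paragraph is genuine: as a standalone sufficient condition on an \emph{arbitrary} partial-repair code, the hypothesis $H(Y_1,\dots,Y_{\Gamma}\mid S)=\Gamma$ does not by itself pin down $\Gamma=\Gamma_{\min}$ nor force $H(\mathcal{D}\mid Y_1,\dots,Y_{\Gamma},S)=0$. The paper's intended reading, visible from how the corollary is invoked in Theorem~4, is that it is applied to codes already built by the achievability template (exactly $M-\Gamma_{\min}$ source packets plus $\Gamma_{\min}$ independent uniform keys, MDS-encoded, with $\Gamma=\Gamma_{\min}$ transmissions). For such codes $H(S)=M-\Gamma_{\min}$ and $H(\mathcal{D}\mid Y,S)=0$ hold by construction, and the stated condition is precisely the residual check that the transmitted packets are fully randomized by the keys, hence strongly secure. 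Your proposal is therefore correct and in fact sharper than the paper: you have not only supplied the missing argument but also identified the implicit side assumptions the paper sweeps under ``immediate.''
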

\subsubsection{Achievable Bound}\label{SubSec:Achievability}
To secure a caching network, we precode the source symbols before entering MDS encoder. This process is illustrated in Fig.~\ref{Fig:SecurePrecoding}.  When we use matrix $T$ as a security precoding matrix, we can prove the existence of precoding matrix $T$. This is stated in the following lemma.

\begin{figure}
 \centering
   \psfrag{c1}[][][3.5]{caching node $1$}
   \psfrag{c2}[][][3.5]{caching node $2$}
   \psfrag{cn}[][][3.5]{caching node $n$}
   \psfrag{mds}[][][3.5]{ $(n,k)-$MDS}
   \psfrag{e}[][][3.5]{ Encoder}
    \psfrag{t}[][][3.5]{$T$}
   \psfrag{s}[][][3.5]{$s_1,s_2,\dots,s_{| \mathcal{S} |}$}
   \psfrag{x}[][][3.5]{$x_1,x_2,\dots,x_{| \mathcal{S} |}$}
   \psfrag{vdots}[][][3.5]{$\vdots$}
   \resizebox{8cm}{!}{\epsfbox{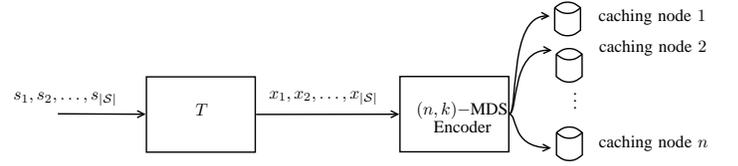}}
\caption{Secure MDS-encoding.}
\label{Fig:SecurePrecoding}
\end{figure}

\begin{lem} Consider a wireless caching network where each caching node stores $M/k$ packets based on an $(n,k)-$MDS code over $GF(q)$; that is, every set of $k$ caching nodes has access to $M$ packets. We use a precoding matrix $T$ for security. Assume some packets of the caching nodes are erased and $\Gamma$ packets are broadcast for the partial repair. There exists a precoding matrix $T$ that makes the partial repair strongly secure with $C_{ss}=M-\Gamma$ if
\begin{equation}
q \geqslant \binom{M}{\Gamma}.
\label{Eq:FieldSizeSS}\end{equation}
\label{pro:StrongSecure}\end{lem}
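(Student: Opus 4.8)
The plan is to exhibit a precoding matrix $T$ such that the broadcast packets $Y_1,\dots,Y_\Gamma$ are a function of the random (non-source) symbols alone, so that $H(Y_1,\dots,Y_\Gamma\mid S)=\Gamma$ and Corollary~\ref{col:secure} applies. Write the $M$ symbols entering the MDS encoder as $x=Ts$ where $s=(s_1,\dots,s_{M-\Gamma},z_1,\dots,z_\Gamma)^{\mathsf T}$, the first $M-\Gamma$ entries being the true source symbols and the last $\Gamma$ entries being uniform i.i.d.\ randomness over $GF(q)$, independent of the source. Since every set of $\Gamma$ broadcast packets is a fixed $GF(q)$-linear combination of the $M$ stored packets in any reference set $\mathcal D$ of $k$ nodes (by the MDS property and the linearity of the repair scheme), there is a fixed $\Gamma\times M$ matrix $G$ over $GF(q)$ with $(Y_1,\dots,Y_\Gamma)^{\mathsf T}=G x = (GT)s$. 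The goal is to choose $T$ invertible so that the $\Gamma\times M$ matrix $GT$ has the form $[\,0\mid B\,]$ where $B$ is a $\Gamma\times\Gamma$ invertible matrix acting only on $(z_1,\dots,z_\Gamma)$; then the $Y_i$ are an invertible function of $\Gamma$ independent uniform symbols, giving $H(Y_1,\dots,Y_\Gamma\mid S)=\Gamma$, hence $C_{ss}=M-\Gamma$ by the corollary together with the upper bound $C_{ss}\le M-\Gamma_{\min}$ of Lemma~\ref{Pro:SecrecyUpperBound} (specializing $\Gamma=\Gamma_{\min}$).

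The existence of such a $T$ is a linear-algebra fact: I need the column span of the last $\Gamma$ columns of $T$ to be a $\Gamma$-dimensional subspace $V\subseteq GF(q)^M$ on which $G$ acts invertibly (i.e.\ $GV=GF(q)^\Gamma$), and the first $M-\Gamma$ columns to span a complementary subspace contained in $\ker G$. Equivalently, I want to choose a basis of $GF(q)^M$ whose last $\Gamma$ vectors map under $G$ to a basis of the row space image and whose first $M-\Gamma$ vectors lie in $\ker G$ (which has dimension $M-\Gamma$ since $G$ has rank $\Gamma$ for a successful repair). Such a basis always exists over any field, so abstractly $T$ exists for every $q$; the quantitative field-size bound $q\ge\binom{M}{\Gamma}$ must come from a stronger requirement that I suspect is the real content — namely that a \emph{single} fixed $T$ must work simultaneously for \emph{all} choices of the reference set $\mathcal D$ (equivalently all the $G$'s arising from the $\binom{n}{k}$ possible decoding sets, or all eavesdropper observation patterns), so that the code is strongly secure regardless of which $k$ nodes one conditions on.

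Accordingly the key steps are: (i) fix the precoding structure $x=Ts$ with the last $\Gamma$ coordinates of $s$ carrying fresh randomness; (ii) observe that for each relevant linear map $G$ (there are at most $\binom{M}{\Gamma}$ distinct rank-$\Gamma$ patterns to worry about, or one bounds the number of ``bad'' events by $\binom{M}{\Gamma}$) strong security fails only if the last $\Gamma$ columns of $T$ fail to span a subspace complementary to $\ker G$, which is a single polynomial (determinant) condition on the entries of $T$; (iii) invoke a Schwartz--Zippel / union-bound argument: choosing the entries of $T$ uniformly at random, each bad event has probability at most $1/q$ times something, and summing over at most $\binom{M}{\Gamma}$ such events shows a good $T$ exists as soon as $q\ge\binom{M}{\Gamma}$; (iv) conclude $H(Y_1,\dots,Y_\Gamma\mid S)=\Gamma$ for this $T$, invoke Corollary~\ref{col:secure} for achievability and Lemma~\ref{Pro:SecrecyUpperBound} for the converse, so $C_{ss}=M-\Gamma=M-\Gamma_{\min}$.

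The main obstacle I anticipate is bookkeeping the counting that yields exactly $\binom{M}{\Gamma}$: one must identify the precise finite collection of linear constraints $T$ has to avoid (one per eavesdropper-observable subspace, or per decoding set, suitably reduced) and verify that there are at most $\binom{M}{\Gamma}$ of them and that each is avoided with positive probability over a field of that size — essentially an MDS-style ``generic position'' argument. Showing simultaneously that the repair still succeeds (the MDS reconstruction property is preserved, since $T$ is invertible) is routine once $T$ is invertible, but it should be stated. I would also double-check the degenerate cases ($\Gamma=\Gamma_{\min}$ versus larger $\Gamma$, and $\Gamma_{\min}=M$ where $C_{ss}=0$ trivially).
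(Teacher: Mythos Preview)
Your approach is sound but differs from the paper's. The paper gives only a two-line proof sketch: it invokes the multicast-network model of partial repair from~\cite{gerami2014Letter}, observes that in that model the eavesdropper sees $\Gamma$ independent packets out of $M$, and then cites Theorem~2 of Bhattad--Narayanan~\cite{bhattad2005weakly} (equivalently Cai--Yeung~\cite{cai2002secure}) for the existence of a secure precoder $T$ whenever $q\ge\binom{M}{\Gamma}$. The field-size bound is thus imported as a black-box consequence of the secure-network-coding literature.

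What you do is essentially unpack that black box: you set up the coset-coding structure explicitly, identify the determinant condition that $T$ must avoid for each rank-$\Gamma$ observation map $G$, and propose a random-$T$ / union-bound argument. That is precisely how the cited results are proved, so your route is correct and more self-contained; the paper's route is shorter and more modular.

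One correction to your bookkeeping: the $\binom{M}{\Gamma}$ does \emph{not} arise from the $\binom{n}{k}$ choices of decoding set $\mathcal D$---that is a different binomial and would not match the stated bound. In the multicast reduction the eavesdropper is (conservatively) allowed to observe any $\Gamma$-subset of the $M$ unit-capacity packets, and it is the number of those subsets that yields $\binom{M}{\Gamma}$. Your alternative guess ``all eavesdropper observation patterns'' is the right one; drop the $\mathcal D$-based count. With that fix your steps (i)--(iv) go through, and your appeal to Corollary~\ref{col:secure} and Lemma~\ref{Pro:SecrecyUpperBound} for the matching bounds is exactly what the surrounding text in the paper does.
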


\begin{proof}[Proof (sketch)] From the results in~\cite{gerami2014Letter}, we can model the partial repair in a wireless caching network into a multicast network. In this multicast network, a source transmits $M$ packets to the destinations. There is an eavesdropper who overhears $\Gamma$ independent packets out of $M$ total stored packets.  We design the matrix $T$ such that the eavesdropper cannot decode any information. By the results in secure network coding in multicast networks (Theorem 2 in~\cite{bhattad2005weakly}, and also results in~\cite{cai2002secure}), there exists $T$, if (\ref{Eq:FieldSizeSS}) holds.\end{proof}

Using Lemmata~\ref{Pro:SecrecyUpperBound} and~\ref{pro:StrongSecure}, we can deduce the strong secrecy capacity of wireless caching networks.

\begin{thm} The strong secrecy capacity of the wireless caching network is
 \begin{equation}
 C_{ss} =\begin{cases}
   M-\Gamma_{\mathrm{min}} & \text{ if }  \Gamma_{\mathrm{min}} < M, \\
   0       & \text{otherwise. }
  \end{cases}
\end{equation}
\label{thm:StrongSecure}\end{thm}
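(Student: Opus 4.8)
The plan is to combine the two lemmata already established. The statement of Theorem~\ref{thm:StrongSecure} asks for the exact value of $C_{ss}$ in two regimes, so the proof is essentially a matching of the upper bound from Lemma~\ref{Pro:SecrecyUpperBound} with the achievability result from Lemma~\ref{pro:StrongSecure}. First I would treat the case $\Gamma_{\mathrm{min}} < M$. Lemma~\ref{Pro:SecrecyUpperBound} gives $C_{ss} \leq M - \Gamma_{\mathrm{min}}$ directly. For the reverse inequality, I would invoke Lemma~\ref{pro:StrongSecure}: choosing the field size $q$ large enough that $q \geq \binom{M}{\Gamma_{\mathrm{min}}}$, there exists a precoding matrix $T$ achieving strong security with $C_{ss} = M - \Gamma_{\mathrm{min}}$. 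Since $\Gamma_{\mathrm{min}} < M$ the binomial coefficient is finite and positive, so such a $q$ exists and the construction is non-degenerate (the source dimension $M - \Gamma_{\mathrm{min}}$ is at least $1$). Putting the two bounds together yields $C_{ss} = M - \Gamma_{\mathrm{min}}$ in this regime.

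Next I would dispose of the case $\Gamma_{\mathrm{min}} \geq M$. Here the perfect reconstruction requirement $H(\mathcal{S}\mid\mathcal{D})=0$ together with the strong security requirement $H(\mathcal{S}\mid Y_1,\dots,Y_\Gamma)=H(\mathcal{S})$ forces $H(\mathcal{S})=0$: re-running the chain of (in)equalities \eqref{upp_bound_eq1}--\eqref{upp_bound_eq4} in the proof of Lemma~\ref{Pro:SecrecyUpperBound} gives $H(\mathcal{S}) \leq M - \Gamma_{\mathrm{min}} \leq 0$, and since entropy is non-negative, $H(\mathcal{S})=0$, i.e.\ $C_{ss}=0$. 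One should also note that $\Gamma_{\mathrm{min}}\le M$ always holds by Corollary~\ref{Pro:closerBound} (the outer $\min$ caps it at $kt=M$), so the boundary case $\Gamma_{\mathrm{min}}=M$ is the only instance of the ``otherwise'' branch, and it is consistent: $M - \Gamma_{\mathrm{min}} = 0$ matches the stated value. This makes the two cases stitch together cleanly.

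I do not expect a serious obstacle here, since the heavy lifting was done in Lemmata~\ref{Pro:SecrecyUpperBound} and~\ref{pro:StrongSecure}; the remaining work is bookkeeping. The one point that deserves care is the degenerate regime: the existence claim in Lemma~\ref{pro:StrongSecure} is only meaningful when there is actually a positive-dimensional source to protect, i.e.\ when $\Gamma_{\mathrm{min}} < M$, which is why the theorem splits into cases. I would therefore state explicitly that for $\Gamma_{\mathrm{min}} < M$ achievability is witnessed by the precoding construction of Lemma~\ref{pro:StrongSecure} over any field with $q \geq \binom{M}{\Gamma_{\mathrm{min}}}$, while for $\Gamma_{\mathrm{min}} = M$ the converse alone pins $C_{ss}$ to $0$. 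With both directions in hand for each regime, the theorem follows.

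\begin{proof}
Consider first the case $\Gamma_{\mathrm{min}} < M$. By Lemma~\ref{Pro:SecrecyUpperBound}, any strongly secure partial repair code satisfies $H(\mathcal{S}) \leq M - \Gamma_{\mathrm{min}}$, hence $C_{ss} \leq M - \Gamma_{\mathrm{min}}$. For the converse direction, since $\Gamma_{\mathrm{min}} < M$ the quantity $\binom{M}{\Gamma_{\mathrm{min}}}$ is finite, so we may choose a finite field $GF(q)$ with $q \geq \binom{M}{\Gamma_{\mathrm{min}}}$. By Lemma~\ref{pro:StrongSecure}, over such a field there exists a precoding matrix $T$ that makes the partial repair strongly secure with $H(\mathcal{S}) = M - \Gamma_{\mathrm{min}}$, so $C_{ss} \geq M - \Gamma_{\mathrm{min}}$. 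Combining the two bounds gives $C_{ss} = M - \Gamma_{\mathrm{min}}$.

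Now consider the case $\Gamma_{\mathrm{min}} \geq M$. By Corollary~\ref{Pro:closerBound}, $\Gamma_{\mathrm{min}} \leq kt = M$ always holds, so in fact $\Gamma_{\mathrm{min}} = M$. Applying the same chain of steps as in the proof of Lemma~\ref{Pro:SecrecyUpperBound}, any strongly secure code satisfies $H(\mathcal{S}) \leq M - \Gamma_{\mathrm{min}} = 0$. Since entropy is non-negative, $H(\mathcal{S}) = 0$, and therefore $C_{ss} = 0$. This completes the proof.
\end{proof}
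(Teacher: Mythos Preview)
Your proposal is correct and follows exactly the approach the paper takes: the theorem is stated there simply as an immediate consequence of Lemmata~\ref{Pro:SecrecyUpperBound} and~\ref{pro:StrongSecure}, with no further proof given. If anything, your write-up is more careful than the paper's, since you explicitly handle the boundary case $\Gamma_{\mathrm{min}}=M$ via Corollary~\ref{Pro:closerBound} and the non-negativity of entropy.
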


\subsection{Weak Secrecy Capacity}
For the weak security condition, the following lemma states an upper bound of the weak secure caching capacity.
\subsubsection{Upper Bound}
\begin{lem}\label{Pro:WeakSecrecyUpperBound}  Suppose an $(n,k)$-MDS-coded wireless caching network has capacity of storing $M$ packets. Suppose caching node $i$, for $i \in [n]$, has lost $t-|P_i|$ ($0 \leq |P_i| \leq t$)  packets and $\Gamma=\sum_{i=1}^n r_i$ packets is transmitted by caching nodes in the partial repair process.  Then the weak secure caching capacity is upper bounded by
\begin{equation}
C_{ws} \leq M.
\end{equation}
\end{lem}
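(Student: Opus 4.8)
The plan is to show that the bound $C_{ws}\le M$ follows entirely from the perfect reconstruction constraint, without ever invoking the weak security condition. Intuitively, the caching network cannot store more independent source information than the $M$ packets that any $k$ nodes jointly hold, regardless of whether the code is (weakly) secure, so this upper bound is just the trivial storage bound.

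First I would fix an arbitrary feasible source file $\mathcal{S}$ together with a storage code satisfying the two constraints in the definition of $C_{ws}$, and select any set $\mathcal{D}\subset[n]$ with $|\mathcal{D}|=k$. By the $(n,k)$-MDS property, these $k$ caching nodes have access to $M$ independent packets, so $H(\mathcal{D})=M$. Then I would chain
\begin{align*}
H(\mathcal{S}) &= H(\mathcal{S}) - H(\mathcal{S}\mid\mathcal{D}) \\
&= I(\mathcal{S};\mathcal{D}) \\
&\le H(\mathcal{D}) = M,
\end{align*}
where the first equality uses the perfect reconstruction condition $H(\mathcal{S}\mid\mathcal{D})=0$, the second is the definition of mutual information, and the inequality is the elementary bound $I(X;Y)\le H(Y)$. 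Taking the maximum of $H(\mathcal{S})$ over all feasible configurations yields $C_{ws}\le M$.

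There is essentially no obstacle here; the only thing worth remarking is the contrast with the strong case. In Lemma~\ref{Pro:SecrecyUpperBound} the strong security condition $H(\mathcal{S}\mid Y_1,\dots,Y_\Gamma)=H(\mathcal{S})$ lets one additionally subtract the mutual information with the repair packets, which after the $H(\mathcal{D})-H(Y_1,\dots,Y_\Gamma)$ bookkeeping produces the extra $-\Gamma_{\min}$ term. The weak security condition, being strictly weaker, does not support that step, so here one is left only with the storage bound $M$. The substantive companion question — whether $C_{ws}=M$ is actually achievable by a weakly secure partial repair code — is the one that requires real work and is treated separately.
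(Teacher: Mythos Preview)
Your proof is correct and is essentially the same as the paper's: the paper simply remarks that since the system stores a file of size $M$ by an MDS code, $C_{ws}\le M$ is a trivial upper bound. You have merely spelled out the one-line entropy computation behind that remark.
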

\begin{proof} As the system is storing a file of size $M$ by an MDS code, then this would be a trivial upper bound.
\end{proof}

\subsubsection{Achievable Bound}\label{SubSec:Achievability}
To weakly secure a caching network, we precode the source symbols before entering MDS encoder, as illustrated in Fig.~\ref{Fig:SecurePrecoding}.  When we use matrix $T$ as a security precoding matrix, then we can prove the existence of precoding matrix $T$. This is stated in the following lemma.

\begin{lem} Consider a wireless caching network where each caching node stores $M/k$ packets based on an $(n,k)-$MDS code over $GF(q)$; that is, every set of $k$ caching nodes has access to $M$ packets. Suppose, we use a precoding matrix $T$ for security. Suppose, some packets of the caching nodes are erased and $\Gamma$ packets are broadcast for the partial repair. For $\Gamma < M$, there exists a precoding matrix $T$ that makes the partial repair weakly secure with $C_{ws}=M$ if
\begin{equation}
q^{M} \geqslant \binom{M}{\Gamma} q^{\Gamma}+q^{M-1}
\label{Eq:FieldSizeWS}\end{equation}
\label{pro:weakSecure}\end{lem}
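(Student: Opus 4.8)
The plan is to follow the template of Lemma~\ref{pro:StrongSecure}, but with all $M$ precoded symbols treated as source symbols, so that the attainable rate is $M$ rather than $M-\Gamma$. As there, I would first use the modelling of~\cite{gerami2014Letter} to recast partial repair as an information-flow (multicast) network: a virtual source injects the $M$ stored symbols, the sick caching nodes are the sinks, and the $\Gamma$ broadcast transmissions carry the data seen by a passive eavesdropper. We compose the $(n,k)$-MDS code with an $M\times M$ precoding matrix $T$ over $GF(q)$, so that what is stored and transmitted are the symbols $\vec x=T\vec s$ with $\vec s=(s_1,\dots,s_M)$. Since $T$ is square, invertibility of $T$ is exactly what preserves the perfect-reconstruction condition $H(\mathcal S\mid\mathcal D)=0$ (any $k$ nodes recover $\vec x$, hence $\vec s$), so exhibiting a suitable invertible $T$ immediately yields $C_{ws}=M$. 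By the reduction together with the secure-network-coding framework of~\cite{cai2002secure,bhattad2005weakly}, it then suffices to secure against the eavesdropper learning any $\Gamma$ of the $M$ coded symbols; there are at most $\binom{M}{\Gamma}$ such patterns $I\subseteq[M]$, $|I|=\Gamma$.

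Next I would translate weak security into a combinatorial condition on $T$. For uniform $\vec s$ and invertible $T$, a short rank computation gives that $s_j$ is independent of $\{x_i\}_{i\in I}$ iff $e_j\notin\mathrm{rowspace}(T_I)$, where $T_I$ is the $\Gamma\times M$ submatrix of $T$ indexed by $I$; moreover $e_j\in\mathrm{rowspace}(T_I)$ for some $\Gamma$-subset $I$ iff the $j$-th row of $T^{-1}$ has Hamming weight at most $\Gamma$. Hence, writing $R=T^{-1}$, the construction succeeds precisely when $R$ is invertible and every row of $R$ has Hamming weight at least $\Gamma+1$, which is feasible in principle exactly because $\Gamma<M$.

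The third step is to prove existence of such an $R$ by a greedy counting argument, choosing its rows $\rho_1,\dots,\rho_M$ one at a time. When $\rho_m$ is to be picked, the forbidden vectors are those of Hamming weight at most $\Gamma$ --- at most $\binom{M}{\Gamma}q^{\Gamma}$ of them, since every weight-$\le\Gamma$ vector is supported on one of the $\binom{M}{\Gamma}$ coordinate $\Gamma$-subsets and each such coordinate subspace has $q^{\Gamma}$ points --- together with those in $\mathrm{span}(\rho_1,\dots,\rho_{m-1})$, at most $q^{m-1}\le q^{M-1}$ of them. Because the first bound strictly overcounts whenever $1\le\Gamma<M$ (the coordinate subspaces overlap, e.g.\ in $\vec 0$), the number of forbidden vectors is strictly less than $\binom{M}{\Gamma}q^{\Gamma}+q^{M-1}$, which by~(\ref{Eq:FieldSizeWS}) is at most $q^{M}$; so a valid $\rho_m$ exists at every step and is, in particular, outside the span of the previous rows, making $R$ invertible with all row weights $\ge\Gamma+1$. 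Taking $T=R^{-1}$ furnishes the desired precoding, and the repair code guaranteed by Theorem~\ref{Th:1} applied to $\vec x$ completes the scheme with $C_{ws}=M$.

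I expect the main obstacle to be handling the invertibility (perfect-reconstruction) requirement simultaneously with the weight constraints: a single-random-$T$ union bound does not cleanly separate the two, which is why the sequential construction is used and why the extra $q^{M-1}$ term appears in~(\ref{Eq:FieldSizeWS}). A secondary point requiring care is justifying that the eavesdropper's view may be taken to be $\Gamma$ coordinate symbols of $\vec x$ rather than an arbitrary $\Gamma$-dimensional combination; this is where the multicast modelling of~\cite{gerami2014Letter} and the freedom to choose the repair code enter, exactly as in the strong-security case.
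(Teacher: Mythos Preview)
Your proposal is correct and follows essentially the same approach as the paper: reduce partial repair to a multicast network via~\cite{gerami2014Letter} and then invoke the weak-security precoding result of Bhattad--Sprintson~\cite{bhattad2005weakly}. The only difference is that the paper's proof is a sketch that cites Theorem~1 of~\cite{bhattad2005weakly} as a black box, whereas you unpack that citation into the explicit greedy row-by-row construction of $R=T^{-1}$ (which is in fact how~\cite{bhattad2005weakly} proves its theorem), making your argument self-contained and also explaining where the $\binom{M}{\Gamma}q^{\Gamma}$ and $q^{M-1}$ terms in~(\ref{Eq:FieldSizeWS}) come from.
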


\begin{proof}[Proof (sketch)]From~\cite{gerami2014Letter}, we can model the partial repair in a wireless caching network into a multicast network. In this multicast network, a source is transmitting $M$ packets to the destinations. There is an eavesdropper who overhears $\Gamma$ independent packets out of $M$ total stored packets for $\Gamma< M$.  We design matrix $T$ such that the eavesdropper cannot decode any information. By the results in secure network coding in multicast networks (Theorem 1 in~\cite{bhattad2005weakly}), there exists $T$, if (\ref{Eq:FieldSizeSS}) holds.\end{proof}

Using Lemmata~\ref{Pro:WeakSecrecyUpperBound} and~\ref{pro:weakSecure}, we can deduce the weak secrecy capacity of wireless caching networks.

\begin{thm} The weak secrecy capacity of the wireless caching network  is
 \begin{equation}
C_{ws} = \begin{cases}
   M & \text{if }  \Gamma < M, \\
   0       & \text{otherwise. }
  \end{cases}
\end{equation}
\label{thm:weakSecure}\end{thm}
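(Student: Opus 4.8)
The final statement to prove is Theorem~\ref{thm:weakSecure}, which asserts that the weak secrecy caching capacity equals $M$ when $\Gamma < M$ and equals $0$ otherwise. The plan is to assemble this directly from the two lemmata immediately preceding it, exactly as the sentence ``Using Lemmata~\ref{Pro:WeakSecrecyUpperBound} and~\ref{pro:weakSecure}'' suggests, and then to handle the degenerate case $\Gamma \ge M$ separately by a short information-theoretic argument.

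For the case $\Gamma < M$, I would proceed in two steps. First, invoke Lemma~\ref{Pro:WeakSecrecyUpperBound}, which gives the converse $C_{ws} \le M$: since the file is stored by an $(n,k)$-MDS code and any $k$ nodes recover $M$ independent packets, the source entropy $H(\mathcal{S})$ can never exceed $M$, so no feasible code beats $M$. Second, invoke Lemma~\ref{pro:weakSecure}: for $\Gamma < M$ and field size $q$ satisfying $q^{M} \ge \binom{M}{\Gamma} q^{\Gamma} + q^{M-1}$, there exists a precoding matrix $T$ yielding a weakly secure code with $H(\mathcal{S}) = M$. Since such a $q$ always exists (the inequality $q^{M} \ge \binom{M}{\Gamma} q^{\Gamma} + q^{M-1}$ holds for all sufficiently large prime powers $q$, because the dominant term $q^M$ outgrows $\binom{M}{\Gamma}q^\Gamma$ whenever $\Gamma < M$), the bound is achievable. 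Matching the converse and the achievability gives $C_{ws} = M$ in this regime.

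For the case $\Gamma \ge M$, I would argue $C_{ws} = 0$. Here the eavesdropper overhears at least $M$ packets of the repair traffic $Y_1,\dots,Y_\Gamma$. By the perfect-reconstruction and flow constraints established for partial repair, the $\Gamma$ broadcast packets together with the surviving side information span the whole stored file, and when $\Gamma \ge M$ the repair packets alone can be arranged to carry $M$ independent dimensions — equivalently, $\mathcal{E} = \{Y_1,\dots,Y_\Gamma\}$ determines $\mathcal{D}$ for some set $\mathcal{D}$ of $k$ nodes, hence determines $\mathcal{S}$. Then $H(s_i \mid \mathcal{E}) = 0$ for at least one source symbol $s_i$ unless $H(s_i) = 0$, so the weak security condition $H(s_i\mid\mathcal{E}) = H(s_i)$ forces every $s_i$ to be deterministic; thus no nonzero amount of source information can be stored securely, i.e. $C_{ws} = 0$. (This mirrors the ``otherwise'' branch of Theorem~\ref{thm:StrongSecure} and of Corollary~\ref{Pro:closerBound}, where $\Gamma_{\min} = kt = M$ is the trivial regime.)

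The main obstacle I anticipate is not in the case $\Gamma < M$, which is a clean sandwich of the two lemmata, but in making the $\Gamma \ge M$ case rigorous: one must show that when $\Gamma \ge M$ the eavesdropper's observations really can be made (or must be, for a valid repair scheme) informationally equivalent to a full set of $k$ nodes' contents, rather than merely having enough ``raw'' dimensions. This requires appealing carefully to the multicast/information-flow model of~\cite{gerami2014Letter} — specifically that any successful partial-repair code routes at least $M$ independent symbols through the broadcast cut — and then observing that weak security is impossible against an adversary who sees a decoding set. I would state this as the lemma's ``trivial upper bound'' reasoning run in reverse and keep the argument short, since the interesting content of the theorem lives entirely in the $\Gamma < M$ regime.
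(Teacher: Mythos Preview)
Your proposal is correct and matches the paper's approach exactly: the paper's entire argument is the single sentence ``Using Lemmata~\ref{Pro:WeakSecrecyUpperBound} and~\ref{pro:weakSecure}, we can deduce the weak secrecy capacity,'' which is precisely your sandwich of the trivial upper bound $C_{ws}\le M$ with the achievability guaranteed by the precoding matrix $T$. You actually go further than the paper in attempting to justify the $\Gamma\ge M$ branch, which the paper simply asserts; your sketch there (combining $H(Y_1,\dots,Y_\Gamma)\ge\Gamma$ from the repair requirement with $H(Y_1,\dots,Y_\Gamma)\le M$ since the $Y_j$ are functions of the stored file, forcing the eavesdropper to learn everything) is the right idea and more than the authors themselves supply.
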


Theorems~\ref{thm:StrongSecure} and~\ref{thm:weakSecure} prove the existence of optimal secure codes. Yet, for the existence, the codes must be over a sufficiently large finite field (consider that $M$ is generally large). This makes the encoding/decoding processes complicated. In addition, in practice codes which provide exact repair are preferred. That is, because exact repair does not  require communications of the updated codes (of the new packets) and it is also easier for data collectors to download the file when there are some systematic nodes (and remain systematic by exact repair). These two main reasons have motivated us to study the secure codes in partial exact repair. In the next section, we propose an explicit code construction for exact partial repair. The proposed codes have lower complexity than the above codes, as they require comparatively small field size. 

\section{Secure Caching Codes for Exact Repair}\label{Sec:CodeForExactRepair}
In this section, we propose  secure codes for exact partial repair. In exact partial repair, the regenerated packets are exactly the same as erased packets. Exact partial repair generally requires more repairing packets to be transmitted~\cite{gerami2014Letter}. Here, for a specific case, we present an  explicit secure code for exact repair that does not require more transmission than functional partial repair.

Consider a wireless caching system using an $(n,k)-$MDS code, where $n=2k$. Assume each caching node stores $t=k$ packets. Assume that caching nodes have equally lost one of their stored packets. That is $|P_i|=k-1$, for $i \in [n]$. By Theorem~\ref{Pro:closerBound}, the minimum $\Gamma$ equals to $n$. Thus,  the minimal partial repair is obtained when each caching node  transmits one coded packets. By Theorem~\ref{thm:StrongSecure}, the strong secure capacity is $M-\Gamma_{\min}=(k-2)k$. Next we show how the minimal partial repair guaranteeing strong security can be achieved.
\subsection{Caching System Construction}
Let us denote the source file by a $k\times (k-2)$ matrix $\mathbf{S}$, where each element represents a distinct packet of the source file, namely,
\begin{equation}
\mathbf{S}=\begin{pmatrix} s_{11} &s_{12} &\dots &s_{1(k-2)}\\ s_{21} &s_{12} &\dots &s_{2(k-2)}\\ \vdots &\vdots &\ddots &\vdots\\s_{k1} &s_{k2} &\dots &s_{k(k-2)}\end{pmatrix}.
\end{equation}
Now, we use random variables $z_1,z_2,\dots,z_{2k}$ to construct a virtual source matrix $S'$ as follows. Note that random variables $z_1,z_2,\dots,z_{2k}$ are selected randomly and  uniformly from the finite field $GF(q)$,
\begin{equation}
\hspace{-7.8cm}\mathbf{S'}=\nonumber\\
\end{equation}
\begin{equation}
\begin{pmatrix} z_{11} &z_{12} &s_{11}+z_{11}+z_{12} &\dots &s_{1(k-2)}+z_{11}+z_{12}\\ z_{21} &z_{22} &s_{21}+z_{21}+z_{22}&\dots &s_{2(k-2)}+z_{21}+z_{22}\\ \vdots &\vdots &\vdots &\ddots &\vdots\\z_{k1} &z_{k2} &s_{k1}+z_{k1}+z_{k2} &\dots &s_{k(k-2)}+z_{k1}+z_{k2}\end{pmatrix}.
\end{equation}

Let us consider $k$ systematic nodes, where  the $i$-th  systematic node stores $k$ packets denoted by the $i$-th row of matrix $\mathbf{S'}$. In addition, there are $k$ parity nodes each of which stores $k$ coded packets. To get the coded packets in parity nodes, we construct the encoding matrix $\mathbf{P}$ as
\begin{equation}
\mathbf{P}=\Phi \mathbf{S'},
\end{equation}
where $\Phi$ is a $k \times k$-dimensional Vandermonde matrix, with elements from a finite field $GF(q)$, for $q>k$. Let<  the $i$-th row of matrix $\mathbf{P}$ represent the $k$ packets stored in the $i$-th parity node. If $\varphi_{ij}$ denotes an element in row $i$ and column $j$ of matrix $\Phi$, then we can show the elements in a parity node $i$ as
\begin{equation} \label{Eq:Encoding_Speical}
(\mathbf{P})_{ij}=\sum_{m=1}^k \varphi_{im}s'_{mj}.
\end{equation}
\begin{pro} The code from encoding vector in (\ref{Eq:Encoding_Speical}) is an $(n,k)$-MDS code.
\end{pro}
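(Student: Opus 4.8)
The plan is to ignore the internal structure of $\mathbf{S'}$ (the split into random symbols $z_{i1},z_{i2}$ and the masked symbols $s_{ij}+z_{i1}+z_{i2}$), since whether a \emph{linear} code is MDS depends only on its generator matrix. Regarding each row of $\mathbf{S'}$ as a single coordinate over the alphabet $GF(q)^{k}$, the contents of the $2k$ nodes are exactly the $2k$ rows of
\begin{equation}
G=\begin{pmatrix} I_{k}\\ \Phi\end{pmatrix}
\end{equation}
applied to $\mathbf{S'}$: the first $k$ rows give the systematic nodes, the last $k$ rows give $\mathbf{P}=\Phi\mathbf{S'}$. Hence the code is $(n,k)$-MDS exactly when every selection of $k$ rows of $G$ forms an invertible $k\times k$ matrix. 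I would fix an arbitrary $A\subseteq[k]$ (chosen systematic nodes) and $B\subseteq[k]$ (chosen parity nodes) with $|A|+|B|=k$, and expand the determinant of the corresponding $k\times k$ submatrix of $G$ by a generalized Laplace expansion along the $|A|$ rows coming from $I_{k}$. Each such row is a distinct standard basis vector, so the expansion has a single nonzero term and, up to sign, the determinant equals $\det\bigl(\Phi_{B,\,[k]\setminus A}\bigr)$, the minor of $\Phi$ on rows $B$ and columns $[k]\setminus A$, which is square because $|B|=|[k]\setminus A|$. Thus the statement reduces to: \emph{every square submatrix of the Vandermonde matrix $\Phi$ is nonsingular}, i.e.\ $\Phi$ is superregular.

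To settle that, I would invoke the classical evaluation of a generalized Vandermonde determinant. Writing $\Phi_{ij}=\alpha_i^{\,j-1}$ with distinct nonzero nodes $\alpha_i\in GF(q)$, a square submatrix on rows $\{i_1<\dots<i_b\}$ and columns yielding exponents $0\le\lambda_1<\dots<\lambda_b\le k-1$ satisfies
\begin{equation}
\det\bigl(\alpha_{i_s}^{\,\lambda_t}\bigr)_{s,t}=\pm\,s_{\mu}\!\left(\alpha_{i_1},\dots,\alpha_{i_b}\right)\prod_{s<s'}\bigl(\alpha_{i_{s'}}-\alpha_{i_s}\bigr),
\end{equation}
where $s_{\mu}$ is the Schur polynomial attached to the partition determined by the chosen exponents. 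The Vandermonde product is nonzero since the $\alpha_i$ are distinct, and each $s_{\mu}$ is a fixed nonzero polynomial with nonnegative integer coefficients; over the reals with positive nodes it is automatically positive, which already handles that case. Over $GF(q)$ I would note that there are only finitely many such $s_{\mu}$ (one per exponent set), none identically zero, so by a Schwartz--Zippel / counting argument one can pick $k$ distinct nonzero $\alpha_i$ avoiding all their zero sets once $q$ is large enough; fixing such a $\Phi$ makes all minors nonzero, hence the code MDS. A clean alternative, avoiding the field-size bookkeeping altogether, is to take $\Phi$ to be a Cauchy matrix rather than a Vandermonde one, since a Cauchy matrix is superregular over any $GF(q)$ with $q\ge 2k$, at no cost to the rest of the construction.

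The main obstacle is precisely this superregularity step. It is tempting to argue as for the full $n\times k$ Vandermonde generator of a Reed--Solomon code, but the systematic part here forces control of \emph{all} minors of $\Phi$, not just the maximal ones, and a Vandermonde matrix is in general \emph{not} superregular over a small field: for instance a $2\times2$ minor on exponents $\{0,3\}$ vanishes whenever $\alpha_i^{3}=\alpha_{i'}^{3}$ for distinct $\alpha_i,\alpha_{i'}$, which happens in $GF(7)$. So the real content is to pin down a node set and field size for which $\Phi$ is superregular (or to substitute a Cauchy matrix); the block-determinant reduction in the first paragraph is then routine.
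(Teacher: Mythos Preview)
Your reduction is correct and considerably more careful than the paper's own argument. The paper's proof is a single sentence: it asserts that ``by selecting any $k$ storage nodes the encoding vectors are independent and thus the original file can be reconstructed by, e.g., Gaussian elimination method,'' with no further justification. You, by contrast, make the right structural observation: with generator $G=\begin{pmatrix} I_k\\ \Phi\end{pmatrix}$, a Laplace expansion along the identity rows shows that the MDS property is equivalent to the vanishing of no square minor of $\Phi$, i.e.\ to $\Phi$ being superregular.

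More to the point, you have uncovered a genuine gap that the paper's one-line proof hides. A $k\times k$ Vandermonde matrix over $GF(q)$ with merely $q>k$ is \emph{not} superregular in general; your $\alpha_i^{3}=\alpha_{i'}^{3}$ example already exhibits a singular $2\times 2$ minor. Hence the paper's hypothesis ``$\Phi$ Vandermonde, $q>k$'' is not by itself enough to force the code to be $(2k,k)$-MDS, and the paper's ``straightforward to verify'' is, strictly speaking, not a proof. Your two proposed remedies---either choose the Vandermonde nodes so that the finitely many relevant Schur polynomials are nonzero (which needs a larger field than $q>k$), or replace $\Phi$ by a Cauchy matrix, which is superregular whenever $q\ge 2k$---are both standard and correct ways to close the gap. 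The Cauchy option is the cleanest, since it yields an explicit field-size bound and leaves the rest of the construction and the repair argument untouched.
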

\begin{proof} It is straightforward to  verify that, by selecting any $k$ storage nodes the encoding vectors are independent and thus  the original file can be reconstructed by, e.g., Gaussian elimination method.\end{proof}

\subsection{Secure Partial Repair for  Systematic Nodes}
 Assume a packet $s_{iu}$ in a systematic node $i$, and a packet $p_{iv}$ in a parity node $i$, for $i \in [k]$, and $u\neq v$ are erased. We can describe the secure repairing process as follows. For the secure partial repair, a parity node $i$ transmits a packet
\begin{equation}\label{Eq:Interference}
(\mathbf{P})_{iu}=\sum_{m=1}^k \varphi_{im}s'_{mu}=\mathbf{\varphi}_{i} \mathbf{s'}_u,
\end{equation}
where $\mathbf{s'}_u$ denotes the $u$-th column in matrix $\mathbf{S'}$. Thus, we have
\begin{equation}\label{Eq:Interference}
\left[\mathbf{P}_{1u} \dots \mathbf{P}_{ku}\right]^T =\mathbf{\Phi} \mathbf{s'}_u.
\end{equation}
Since $\mathbf{\Phi}$ is a non-singular matrix, all the systematic nodes can recover their erased packets by solving linear equations, e.g., by  Gaussian elimination. Note that the eavesdropper cannot obtain any information from the broadcast information $\mathbf{\varphi}_{i} \mathbf{s'}_u$ for $i \in [k]$.

\subsection{Secure Partial Repair for Parity Nodes}
Similarly, we can recover the erased packets in the parity nodes. We can follow the approach of changing variables proposed in~\cite{suh2010exact}. By changing variables, we can define the packets in parity nodes as new systematic packets. That is,
\begin{equation}
\mathbf{S_{\mathrm{new}}}=\mathbf{P}=\Phi \mathbf{S'}.
\end{equation}
Then,
\begin{equation}
\mathbf{P_{\mathrm{new}}}=\mathbf{S'}=\Phi^{-1} \mathbf{S_{\mathrm{new}}}.
\end{equation}
For exact secure partial repair, systematic node $i$ transmits $(S)'_{iv}=(\mathbf{P_{\mathrm{new}}})_{iv}$. Since $\mathbf{\Phi}^{-1}$ is a non-singular matrix, the erased packets can be recovered similarly to repair in systematic nodes. Again, the eavesdropper cannot obtain any information from the broadcast information $(S')_{iv}$ for $i \in [k]$. Thus, the system is strongly secure, and $q> k$ is the required finite field size for code construction.
\begin{thm} The proposed codes in this section achieves the strong secrecy capacity.
\end{thm}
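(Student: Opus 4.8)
The plan is to verify that the explicit construction of Sections IV-A through IV-C meets all three requirements that define optimality: (i) it uses the minimum number of repair transmissions $\Gamma_{\min}=n=2k$ given by Corollary~\ref{Pro:closerBound}; (ii) every erased packet is recovered exactly; and (iii) the scheme is strongly secure, i.e., $H(\mathbf{S}\mid Y_1,\dots,Y_\Gamma)=H(\mathbf{S})$. Since Theorem~\ref{thm:StrongSecure} already identifies the strong secrecy capacity as $M-\Gamma_{\min}=k(k-2)$, it suffices to exhibit a code that simultaneously achieves (i)--(iii), because $H(\mathbf{S})=k(k-2)=|\mathbf{S}|$ is the number of genuine source packets placed in $\mathbf{S'}$.

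First I would record the parameters: $n=2k$, $t=k$, $|P_i|=k-1$ for all $i$, so $M=kt=k^2$ and the virtual matrix $\mathbf{S'}$ carries $k(k-2)$ source packets $s_{ij}$ together with $2k$ uniform independent random packets $z_{i1},z_{i2}$. Plugging $n_h=0$ (no healthy node) into the ``otherwise'' branch of~\eqref{Eq:closerBound} gives $\Gamma_{\min}=\min\{k^2,\;\tfrac{nkt}{n-k}-\tfrac{k}{n-k}\sum_i|P_i|\}=\min\{k^2,\;2k^2-k(k-1)\}=k^2-k(k-2)=2k$, matching that each of the $2k$ nodes broadcasts exactly one packet in the repair procedures of Sections IV-B and IV-C; hence requirement (i) holds. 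Requirement (ii) follows from the MDS/repair argument: in the systematic-node case the $k$ parity nodes broadcast the entries $(\mathbf{P})_{iu}$, which by~\eqref{Eq:Interference} equal $\mathbf{\Phi}\mathbf{s'}_u$, and since $\mathbf{\Phi}$ is Vandermonde (nonsingular for $q>k$) every systematic node inverts the system to obtain its missing $s'_{iu}$; the change-of-variables trick from~\cite{suh2010exact} reduces the parity-node case to the systematic one via $\mathbf{P_{\mathrm{new}}}=\mathbf{\Phi}^{-1}\mathbf{S_{\mathrm{new}}}$.

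The substantive step is requirement (iii), strong security. I would argue it directly from the structure of $\mathbf{S'}$ rather than invoking Lemma~\ref{pro:StrongSecure}'s field-size bound. The eavesdropper sees, per row $i$, exactly one broadcast symbol: in one repair event it is a linear combination of the column $\mathbf{s'}_u$ (for a fixed $u\ge 3$ this is $s_{i(u-2)}+z_{i1}+z_{i2}$ up to the known coefficient $\varphi$), and in the other it is the first or second coordinate $z_{i1}$ or $z_{i2}$ — but the two parts of the repair (systematic vs. parity) concern \emph{disjoint} erasure patterns, so in any single partial-repair instance Eve obtains one symbol per row. The key observation is that conditioned on that one observed symbol, the pair $(z_{i1},z_{i2})$ still contains a uniform ``fresh'' coordinate that one-time-pads the source entries $s_{i1},\dots,s_{i(k-2)}$ in that row; since the $z$'s are independent across everything and independent of $\mathbf{S}$, we get $H(\mathbf{S}\mid \text{Eve's view})=H(\mathbf{S})$ row by row, and then jointly by independence of rows. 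I would make this precise with a short entropy computation: let $E_i$ be the symbol(s) Eve gets touching row $i$; show $H(E_i\mid \mathbf{S})=H(E_i)$ (the randomness $z$ makes $E_i$ uniform and source-independent), hence $I(\mathbf{S};E_i)=0$, and sum.

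The main obstacle I anticipate is bookkeeping of \emph{which} symbols Eve actually overhears across the full repair process and confirming that this never amounts to two independent linear forms on the same row's $(z_{i1},z_{i2},s_{i1},\dots)$ — if it did, the two random symbols per row would be exhausted and one source symbol could leak. One must check that the erasure model (each node loses exactly one packet, and the scheme repairs a systematic loss $s_{iu}$ together with a parity loss $p_{iv}$ with $u\ne v$) forces the broadcasts to probe each row in essentially one ``direction,'' so that the two masking variables per row are never both consumed. Once that combinatorial check is in place, the entropy argument is routine and we conclude $C_{ss}=M-\Gamma_{\min}=k(k-2)$ is attained, which is exactly the claim of the theorem.
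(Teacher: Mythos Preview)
Your route diverges from the paper's. The paper's proof is essentially one line: it invokes Corollary~\ref{col:secure}, which (reading off the equality case in the chain of Lemma~\ref{Pro:SecrecyUpperBound}) says the upper bound $M-\Gamma_{\min}$ is met whenever $H(Y_1,\dots,Y_\Gamma\mid S)=\Gamma$. So all the paper checks is that, conditioned on $\mathbf{S}$, the $2k$ broadcast symbols $\{(\mathbf{S'})_{iv},(\mathbf{P})_{iu}:i\in[k]\}$ still carry $2k$ $q$-ary symbols of entropy, i.e., are a bijective function of the $2k$ keys $\{z_{ij}\}$. Your steps (i)--(ii) are verifications the paper does not revisit inside this proof, and your step (iii) tries to establish $I(\mathbf{S};Y)=0$ by a direct row-by-row entropy computation rather than via Corollary~\ref{col:secure}; the two targets are equivalent, but routing through the corollary is shorter and sidesteps exactly the bookkeeping you anticipate as the ``main obstacle.''

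There is also a genuine gap in your step (iii). You assert that ``in any single partial-repair instance Eve obtains one symbol per row,'' but in the paper's scenario all $2k$ nodes have an erasure simultaneously, so \emph{both} halves of the repair run in the same instance: Eve hears the $k$ parity broadcasts $\{(\mathbf{P})_{iu}\}$ \emph{and} the $k$ systematic broadcasts $\{(\mathbf{S'})_{iv}\}$. Since $\Phi$ is invertible, this is informationally equivalent to Eve learning columns $u$ and $v$ of $\mathbf{S'}$---two symbols per row, not one. Your further claim that the systematic-node broadcasts are always ``the first or second coordinate $z_{i1}$ or $z_{i2}$'' presupposes $v\in\{1,2\}$, which the setup does not stipulate (it only asks $u\neq v$). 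You correctly flag this obstacle in your last paragraph but then wave it aside; in fact, if both $u,v\ge 3$, the two observed symbols in row $i$ are $s_{i(u-2)}+z_{i1}+z_{i2}$ and $s_{i(v-2)}+z_{i1}+z_{i2}$, which carry the \emph{same} mask $z_{i1}+z_{i2}$, so their difference reveals $s_{i(u-2)}-s_{i(v-2)}$ and the one-time-pad argument collapses (equivalently, $H(Y\mid S)=k<\Gamma$ in that case). The combinatorial check you promise therefore does not go through in general; you should either restrict the erasure pattern (e.g., $\{u,v\}\cap\{1,2\}\neq\emptyset$) or flag the issue rather than declare the remaining entropy argument ``routine.''
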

\begin{proof}
The proposed codes satisfy the following condition
\begin{equation}
H(Y_1,Y_2,\dots, Y_{\Gamma} | S)=\Gamma,
\end{equation}
where $Y_1,Y_2,\dots, Y_{\Gamma}$ are $(S')_{iv},(\mathbf{P})_{iu}$ for $i \in [k]$. We then use Corollary~\ref{col:secure} to prove that these codes are optimally secure.
\end{proof}
We showed that optimal security for exact partial repair in wireless caching networks can be achieved by low complexity codes and through simple operations which is known as repair-by-transfer~\cite{shah2012distributed}. We shall study low complexity codes for optimally secure and exact partial repair in a general setting as future work.
\section{Conclusion}\label{Sec:Conclusion}
We studied security in partial repair when an eavesdropper has access to the broadcast information. We investigated information-theoretically strong and weak secure caching capacities. We first derived upper bounds and then showed that there exist secure codes when the codes are over a sufficiently large finite field size. We then proposed codes that have low complexity and the required finite field size is comparatively small. These codes are interesting in practical wireless caching networks, as the code construction is explicit and the partial repair is exact repair. Our proposed codes work well in homogenous networks (all nodes lose equal number of packets). As the future work, we aim to design  secure codes in partial repair for more general networks, including heterogenous networks.
\bibliographystyle{IEEETran}
\bibliography{IEEEabrv,Ref2015}

\end{document}